\definecolor{refcolor}{RGB}{0,0,190}
\newtheorem{theorem}{Theorem}
\newtheorem{proposition}{Proposition}
\theoremstyle{remark}
\newtheorem{remark}{Remark}
\theoremstyle{definition}
\newtheorem{observation}{Observation}
\newtheorem{example}{Example}
\newtheorem{definition}{Definition}
\newtheorem{principle}{Principle}
\newtheorem{condition}{Condition}
\newtheorem{difficulty}{Difficulty}
\theoremstyle{definition}
\renewcommand{\thedefCustom}{\arabic{definition}}
\newcommand{\setdefCustomtag}[1]{
  \let\oldthedefCustom\thedefCustom
  \renewcommand{\thedefCustom}{#1}
  \g@addto@macro\enddefCustom{
    \global\let\thedefCustom\oldthedefCustom}
  }
\renewcommand{\thethesisCustom}{\arabic{thesis}}
\newcommand{\setthesisCustomtag}[1]{
  \let\oldthethesisCustom\thethesisCustom
  \renewcommand{\thethesisCustom}{#1}
  \g@addto@macro\endthesisCustom{
    \global\let\thethesisCustom\oldthethesisCustom}
  }
\renewcommand{\theassumption}{\arabic{assumption}}
\newcommand{\setassumptiontag}[1]{
  \let\oldtheassumption\theassumption
  \renewcommand{\theassumption}{#1}
  \g@addto@macro\endassumption{
    \global\let\theassumption\oldtheassumption}
  }
\renewcommand{\theclaim}{\arabic{claim}}
\newcommand{\setclaimtag}[1]{
  \let\oldtheclaim\theclaim
  \renewcommand{\theclaim}{#1}
  \g@addto@macro\endclaim{
    \global\let\theclaim\oldtheclaim}
  }
\theoremstyle{remark}
\renewcommand{\thepointItem}{\quad\arabic{pointItem}}
\newcommand{\setpointItemtag}[1]{
  \let\oldthepointItem\thepointItem
  \renewcommand{\thepointItem}{#1}
  \g@addto@macro\endpointItem{
    \global\let\thepointItem\oldthepointItem}
  }
\def\({\left(}
\def\){\right)}
\newcommand{\tn}{\textnormal}
\newcommand{\boplus}{\textstyle{\bigoplus}}
\newcommand{\hilbert}{\mathcal{H}}
\newcommand{\mc}[1]{\mathcal{#1}}
\newcommand{\mf}[1]{\mathfrak{#1}}
\newcommand{\wh}[1]{\widehat{#1}}
\newcommand{\wt}[1]{\widetilde{#1}}
\newcommand{\R}{\mathbb{R}}
\newcommand{\C}{\mathbb{C}}
\newcommand{\N}{\mathbb{N}}
\newcommand{\abs}[1]{\left\lvert#1\right\rvert}
\newcommand{\de}{\operatorname{d}}
\newcommand{\tr}{\operatorname{tr}}
\newcommand{\ie}{\textit{i.e.}\ }
\newcommand{\vs}{\textit{vs.}\ }
\newcommand{\eg}{\textit{e.g.}\ }
\newcommand{\cf}{\textit{cf.}\ }
\newcommand{\etc}{\textit{etc}}
\newcommand{\U}{\tn{U}}
\newcommand{\schrod}{Schr\"odinger}
\newcommand{\bra}[1]{\langle#1|}
\newcommand{\ket}[1]{|#1\rangle}
\newcommand{\braket}[2]{\langle#1|#2\rangle}
\newcommand{\pobs}[1]{\mathsf{#1}}
\newcommand{\obs}[1]{\wh{\pobs{#1}}}
\newcommand{\x}{\mathbf{x}}
\newcommand{\y}{\mathbf{y}}
\newcommand{\n}{\mathbf{n}}
\def\sref #1{\S\ref{#1}}
\def\eqref #1{(\ref{#1})}
\newcommand{\image}[3]{
\vspace{-0.1in}
\begin{center}
\begin{figure}[!ht]
\includegraphics[width=#2\textwidth]{#1}
\caption{\small{\label{#1}#3}}
\end{figure}
\end{center}
\vspace{-0.4in}
}
\title[Born rule as classical probability]{Born rule: quantum probability as classical probability}
\author*{\fnm{Ovidiu Cristinel} \sur{Stoica}}\email{cristi.stoica@theory.nipne.ro,holotronix@gmail.com}
\affil*{\orgdiv{Dept. of Theoretical Physics}, \orgname{NIPNE---HH, Bucharest}, \city{Bucharest}, \country{Romania}, ORCID: \href{https://orcid.org/0000-0002-2765-1562}{0000-0002-2765-1562}}
\abstract{I provide a simple derivation of the Born rule as giving a classical probability, that is, the ratio of the measure of favorable states of the system to the measure of its total possible states.

In classical systems, the probability is due to the fact that the same macrostate can be realized in different ways as a microstate. Despite the radical differences between quantum and classical systems, I show that the same can be applied to quantum systems, and the result is the Born rule.

This works only if the basis is continuous (an eigenbasis of observables with continuous spectra), but all known physically realistic measurements involve a continuous basis (the position basis).

The continuous basis is not unique, and for subsystems it depends on the observable.

But for the entire universe, there are continuous bases that give the Born rule for all measurements, because all measurements reduce to distinguishing macroscopic pointer states, and macroscopic observations commute. This allows for the possibility of a unique ontic basis for the entire universe. 

In the wavefunctional formulation, the basis can be chosen to consist of classical field configurations, and the coefficients $\Psi[\phi]$ can be made real by absorbing them into a global U(1) gauge.

For the many-worlds interpretation, this result gives the Born rule from micro-branch counting.}
\keywords{Born rule, state counting, Everett's interpretation, many-worlds interpretation, branch counting}
\begin{document}

\maketitle

\section{Introduction}
\label{s:intro}

In quantum mechanics, the Born rule prescribes that the probability that the result of a quantum measurement is the eigenvalue $\lambda_j$ of the observable is
\begin{equation}
\label{eq:born_rule}
\tn{Prob}(\lambda_j)=\bra{\psi}\obs{P}_j\ket{\psi},
\end{equation}
where the unit vector $\ket{\psi}$ represents the state of the observed system right before the measurement, and $\obs{P}_j$ is the projector on the eigenspace corresponding to $\lambda_j$.

The \emph{Projection Postulate} states that $\ket{\psi}$ projects onto one of the eigenspaces $\obs{P}_j$ with a probability given by \eqref{eq:born_rule}.

von Neumann expressed already in $1927$ the desirability of having a derivation of the Born rule ``from empirical facts or fundamental probability-theoretic assumptions, \emph{i.e.}, an inductive justification'' \cite{vonNeumann1955MathFoundationsQM}.
Gleason's theorem shows that any countably additive probability measure on closed subspaces of a Hilbert space $\hilbert$, $\dim\hilbert>2$, has the form $\tr(\obs{P}\wh{\rho})$, where $\obs{P}$ is the projector on the subspace and $\wh{\rho}$ is a density operator \cite{GleasonTheorem1957}. If the state is represented by $\wh{\rho}$, this can be interpreted as the Born rule.
Gleason's theorem is very important, in showing that if there is a probability rule, it should have the form of the Born rule.
But it does not say that the density operator of the observed system is the same $\wh{\rho}$, how the probabilities arise in the first place, and what they are about \cite{Earman2022TheStatusOfTheBornRuleAndTheRoleOfGleasonsTheoremAndItsGeneralizations}. For example, it is unable to convert the amplitudes of the branches in the many-worlds interpretation (MWI) \cite{Everett1957RelativeStateFormulationOfQuantumMechanics,deWittGraham1973ManyWorldsInterpretationOfQuantumMechanics,Wallace2012TheEmergentMultiverseQuantumTheoryEverettInterpretation,SEP-Vaidman2021MWI} into actual probabilities. For this reason, the search for a proof of the Born rule continues.

There are numerous proposals to derive the Born rule.
Earlier attempts to derive it from more basic principles include \cite{Finkelstein1963LogicOfQuantumPhysics_Born_rule_derivation}, \cite{Hartle1968QuantumMechanicsOfIndividualSystems}, \cite{FarhiEtal1989HowProbabilityArisesInQuantumMechanics} \etc. Such approaches based on a frequency operator were accused of circularity \cite{Cassinello1996OnTheProbabilisticPostulateOfQuantumMechanics,CavesSchack2005PropertiesOfTheFrequencyOperatorDoNotImplyTheQuantumProbabilityPostulate}.
Other proposals, in relation to the MWI, are based on imposing conditions like additivity to the probability measure \cite{Everett1957RelativeStateFormulationOfQuantumMechanics}, many-minds
\cite{AlbertLower1988InterpretingMWI_ManyMinds}, decision theory \cite{Deutsch1999QuantumTheoryOfProbabilityAndDecision,Wallace2002QuantumProbabilitiesAndDecisionRevisited,Saunders2004BornRuleFromOperationalAssumptions} (accused of circularity in \cite{Baker2007MeasurementOutcomesAndProbabilityInEverettianQuantumMechanics,BarnumEtal2000QuantumProbabilityFromDecisionTheory}), envariance \cite{Zurek2005ProbabilitiesFromEntanglement} (accused of circularity in \cite{SchlosshauerFine2003OnZureksDerivationOfTheBornRule}), measure of existence \cite{Vaidman2012ProbabilityInMWI}, branch counting based on refinements of the branching structure \cite{Saunders2021BranchCountingInTheEverettInterpretationOfQuantumMechanics} \etc. For a review see \cite{Vaidman2020DerivationsOfTheBornRule}. 

In this article, I investigate the possibility of obtaining probabilities that are very similar to the classical ones.
As in classical physics, what we observe are not actually the states (also named \emph{microstates} in this context), but the macrostates.
If each macrostate can be realized in different ways as a microstate, probabilities can arise from the relative count, or rather the relative measure (since the basis is uncountable), of the states underlying each macrostate, just like in the standard understanding of probabilities.

In Section \sref{s:probabilities} I argue that, contrary to the common view on quantum mechanics, an ``ontic'' or ``classical'' basis for the entire universe is possible, allowing for classical-like probabilities in quantum mechanics.

In Sec. \sref{s:born} I prove the main result, that probability density can be understood as a distribution of ``ontic'' or ``classical'' states.

In Sec. \sref{s:interpretation} I discuss the physical interpretation of this derivation of the Born rule, how it makes possible the existence of a ``classical'' or ontic basis for the entire universe, how complex numbers appear, and how this yields probabilities in the many-worlds interpretation.

There exist already several interesting proposals to understand the Born rule classically \cite{Bell2004TheMeasurementTheoryOfEverettAndDeBrogliesPilotWave,Tipler2006WhatAboutQuantumTheoryBayesAndTheBornInterpretation,Poirier2010BohmianMechanicsWithoutPilotWaves,SchiffPoirier2012CommunicationQuantumMechanicsWithoutWavefunctions,HallDeckertWiseman2014QuantumPhenomenaModeledByInteractionsBetweenManyClassicalWorlds,Bostrom2015QuantumMechanicsAsADeterministicTheoryOfAContinuumOfWorlds,Sebens2015QuantumMechanicsAsClassicalPhysics,Arve2020EverettsMissingPostulateAndTheBornRule,Tappenden2023SetTheoryAndManyWorlds}, and they will be compared with the one proposed here in Section \ref{s:discussion}. A central difference is that the proposal presented here can get rid completely of the wavefunction, incorporating it fully in the classical worlds, while the content of the theory remains the same as in the wavefunctional formulation of quantum theory. The final section also argues for the advantages of this way of understanding probabilities classically without having to assume counterfactual worlds that are possible but non-physical.

\section{Classical {\vs} quantum probabilities}
\label{s:probabilities}

In this Section we investigate the differences and similarities between classical probabilities and probabilities in quantum mechanics. This will provide the physical justification to interpret probabilities in a quantum universe, based on the proof given in Sec. \sref{s:born}.

Let us recall what classical probabilities are.

\begin{definition}[Classical probabilities]
\label{def:probability-space}
A \emph{probability space} $(\Omega,\mc{F},P)$ consists of a \emph{sample space} $\Omega$ (the set of possible \emph{outcomes}), an \emph{event space} $\mc{F}$ (a $\sigma$-algebra of subsets of $\Omega$), and a \emph{probability function} $P:\mc{F}\to [0,1]$.
\end{definition}

Even in a classical and deterministic universe, where a ``Laplace's demon'' who knows the physical state in full detail should be able to assign to each event a probability equal to either $0$ or $1$, the evolution of a system can appear to be unpredictable at the macro-level. The reason is that the state cannot be known exactly. We can only access the macrostates. The macrostates correspond to events, and the microstates to outcomes, \cf Definition \ref{def:probability-space}.

\begin{observation}
\label{obs:agents}
Nontrivial probabilities exist for agents who lack complete information about the microstate.
\end{observation}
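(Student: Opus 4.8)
The plan is to read ``nontrivial probability'' as a value lying strictly inside $(0,1)$, and to model ``the information available to an agent'' as a coarse-graining of the sample space $\Omega$ of Definition~\ref{def:probability-space}. An agent who has resolved the system down to a macrostate $M\in\mc{F}$ with $P(M)>0$ assigns to any event $E\in\mc{F}$ the conditional probability $P(E\mid M)=P(E\cap M)/P(M)$. Complete information corresponds to the finest coarse-graining, in which the agent pins down the microstate $\omega$ itself; incomplete information corresponds to a strictly coarser $M$ containing many microstates.

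First I would dispose of the complete-information case: an agent who knows the microstate $\omega$ assigns to each event $E$ its indicator value $\mathbf{1}_E(\omega)\in\{0,1\}$, so every probability is trivial, in agreement with the Laplace's-demon remark preceding the statement. Next I would treat incomplete information. By hypothesis the agent's macrostate $M$ fails to determine the microstate, so some microstates in $M$ disagree on an observable; formally, there is an event $E\in\mc{F}$ with $P(E\cap M)>0$ and $P(M\setminus E)>0$. Dividing by $P(M)>0$ then yields $0<P(E\mid M)<1$, which is the asserted nontrivial probability, and it is nontrivial precisely because $M$ is realized by microstates that the agent cannot tell apart.

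The substantive step is not the division but the \emph{existence} of such a witnessing event $E$, and this is where the statement risks becoming vacuous: if the coarse macrostate $M$ happened to coincide with an atom of $P$, or if every event were $P$-almost constant on $M$, then even an agent ignorant of the microstate would still assign only $0$ or $1$. The main obstacle is therefore to fix the right reading of ``lacks complete information about the microstate'': it must mean not merely that the agent cannot name $\omega$, but that the agent's macrostate genuinely straddles an event of positive conditional measure. Under that reading the observation follows at once, and it is exactly this reading---macrostates realizable by many mutually distinguishable microstates---that Section~\sref{s:born} will carry over to the quantum setting in order to recover the Born rule.
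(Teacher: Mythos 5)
Your proposal is correct and takes essentially the same route as the paper: the paper supports this observation only informally, via the Laplace's-demon remark preceding it (complete information yields probabilities $0$ or $1$) and the dice example following it, whose ratio of the measure of favorable microstates to the total measure (Condition~\ref{cond:probability}) is exactly your conditional probability $P(E\mid M)$ relative to a coarse macrostate $M$. Your caveat about needing a witnessing event $E$ that the macrostate genuinely straddles is a refinement the paper leaves implicit, but it does not alter the approach.
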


For example, the classical probability that throwing a pair of dice results in the events
\begin{equation}
\label{eq:dice_event}
\mbox{\LARGE$
\vcenter{\hbox{\epsdice{6}}}+\vcenter{\hbox{\epsdice{6}}}
$}
\end{equation}
is given by the probability measure of the set of microstates (outcomes) that realize the macrostate in which the event is \eqref{eq:dice_event} divided by the total probability measure.

Therefore, classical probabilities satisfy

\begin{condition}[Probability]
\label{cond:probability}
The probability is \emph{the ratio of the measure of favorable outcomes to the total measure of possible outcomes}.
\end{condition}

In classical physics, Condition \ref{cond:probability} seems to make sense because the universe is in a unique state at any time. But in quantum mechanics, a system can be in a superposition of multiple states that coexist in parallel.

\begin{remark}
\label{rem:classical-probability}
The classical interpretation of probabilities has its own problems, in particular in the continuum case, mainly its reliance on the Principle of Indifference \cite{sep-probability-interpret,Wallace2012TheEmergentMultiverseQuantumTheoryEverettInterpretation}. 
Note that the very word ``classical'' is ambiguous in this context, as it can be understood in two ways.
On one hand ``classical'' refers to the classical interpretation of probabilities, in contrast with alternatives like the frequentist or the Bayesian approaches.
On the other hand, it refers to probabilities as known in a classical world, in contrast with quantum probabilities, regardless of the question of their interpretation mentioned above and discussed in \cite{sep-probability-interpret}.
Quantum mechanics adds a new layer of problems to probabilities, because a system can be in a superposition of multiple states that coexist in parallel, and there seem to exist multiple ways to see a quantum state as multiple classical states.

This article does not aim to vindicate classical probability compared to alternatives like the frequentist approach and the Bayesian approach. In the view of this author they complement each other, and they all come in classical and quantum flavors, although there are very strong reasons to adopt the Bayesian interpretation.
The aim of this article is rather to explain quantum probabilities as probabilities in a classical world.
Condition \ref{cond:probability} mentions the classical interpretation simply because it is more commonly used and simple to understand mathematically as in Definition \ref{def:probability-space}. Once Condition \ref{cond:probability} is met, one can apply their own preferred framework. We will return to this in Section \sref{s:discussion}.
\end{remark}

\begin{difficulty}
\label{difficulty:parallel}
Unlike classical states, quantum states seem to be able to coexist in parallel, in a superposition, as shown by interference experiments. Moreover, the states in superposition have completely different meanings in different bases.
\end{difficulty}

From Difficulty \ref{difficulty:parallel}, another difficulty follows.
In a quantum universe, the central difference seems to be that there are multiple ways in which the macrostate of a subsystem can be realized as microstates, each depending on the experimental settings. For example, the spin of a particle can be interpreted as consisting of definite possible spins $\ket{\uparrow}_z$ and $\ket{\downarrow}_z$ if the spin is measured along the axis $z$, but not if it is measured along another axis. In classical mechanics, the possible results of the measurement are considered to be independent of the measurement settings, provided that the effect  of the observation on the observed system can be made arbitrarily small.

The main difficulty in the applicability of classical probabilities in quantum mechanics is therefore

\begin{difficulty}
\label{difficulty:nonunique-probability-space}
Unlike classical systems, the probability space (the set of microstates) seems to depend on the measurement settings.
\end{difficulty}

But in reality the situation is not necessarily as described in Difficulty \ref{difficulty:nonunique-probability-space}.
To see this, consider an example.

The measurement of the spin of a spin-$1/2$ particle along the axis $z$ results in the possible states
\begin{equation}
\label{eq:spin_pointer_z}
\begin{cases}
\ket{\uparrow}_z\ket{\tn{up}}_z \\
\ket{\downarrow}_z\ket{\tn{down}}_z, \\
\end{cases}
\end{equation}
where $\ket{\uparrow}_z$ and $\ket{\downarrow}_z$ are the spin states of the particle along the axis $z$, and $\ket{\tn{up}}_z$ and $\ket{\tn{down}}_z$ are the corresponding states of the pointer of the measuring device.
A spin measurement along the axis $x$ leads to a different decomposition,
\begin{equation}
\label{eq:spin_pointer_x}
\begin{cases}
\frac{1}{\sqrt{2}}\(\ket{\uparrow}_z+\ket{\downarrow}_z\)\ket{\tn{up}}_x \\
\frac{1}{\sqrt{2}}\(\ket{\uparrow}_z-\ket{\downarrow}_z\)\ket{\tn{down}}_x. \\
\end{cases}
\end{equation}

The macrostates corresponding to the results of the spin measurements, including those along distinct axes, are orthogonal, because we can tell by examining the experimental setup along what direction in space the spin is measured, and what result is obtained.
Therefore, the microstates from eq. \eqref{eq:spin_pointer_z} are orthogonal to those from eq. \eqref{eq:spin_pointer_x}, even though the states of the observed system $\ket{\uparrow}_z$ and $\ket{\downarrow}_z$  are not orthogonal to $1/\sqrt{2}\(\ket{\uparrow}_z\pm\ket{\downarrow}_z\)$.

In general, every quantum measurement ultimately becomes a direct observation of the macrostate of the measuring device. So every measurement reduces to distinguishing macrostates. Macrostates are distinguished by macro-observables, and all macro-observables commute.

\emph{Macrostates} are represented by subspaces of the form $\obs{P}_{\alpha}\hilbert$,
where $(\obs{P}_{\alpha})_{\alpha\in\mc{A}}$ is a set of commuting projectors on $\hilbert$, so that $[\obs{P}_{\alpha},\obs{P}_{\beta}]=0$ for any $\alpha\neq\beta\in\mc{A}$, and $\boplus_{\alpha\in\mc{A}}\obs{P}_{\alpha}\hilbert=\hilbert$.
This claim is empirically adequate, as illustrated by the example of spin measurements.
This position is adopted for example in decohering histories approaches \cite{GellMannHartle1990DecoheringHistories} and in the MWI \cite{Wallace2012TheEmergentMultiverseQuantumTheoryEverettInterpretation}.

\begin{observation}
\label{obs:macrostates}
We never observe the microstate, only the macrostates.
\end{observation}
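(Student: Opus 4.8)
The plan is to treat Observation \ref{obs:macrostates} not as a freestanding mathematical assertion but as a consequence of the macrostate structure just introduced, combined with the empirical premise that every observation terminates in the reading of a macroscopic record. First I would make the premise explicit: any actual measurement, however it is arranged microscopically, ends by correlating the observed system with the pointer of a measuring device, and what the observer registers is the macrostate of that pointer. This is precisely the content illustrated by the spin examples \eqref{eq:spin_pointer_z}--\eqref{eq:spin_pointer_x}, where the accessible information is ``which pointer position,'' not the underlying state vector.

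Next I would invoke the representation of macrostates as the subspaces $\obs{P}_\alpha\hilbert$ associated with a commuting family of projectors $(\obs{P}_\alpha)_{\alpha\in\mc{A}}$ with $\boplus_{\alpha\in\mc{A}}\obs{P}_\alpha\hilbert=\hilbert$. Since the macro-observables all commute, they are simultaneously diagonalizable and their common eigenspaces are exactly these blocks. The most any macro-observation can return is the label $\alpha$ selecting one block $\obs{P}_\alpha\hilbert$---equivalently, the eigenvalue of a macro-observable---so the entire informational content of an observation is the index $\alpha$, nothing finer.

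I would then close the argument by noting that a macrostate subspace $\obs{P}_\alpha\hilbert$ generically has dimension greater than one, and hence contains a continuum of distinct microstates, all mapped to the same outcome $\alpha$. Because no physically realizable observation refines the commuting family beyond the blocks $\obs{P}_\alpha\hilbert$, distinct microstates within a single block are operationally indistinguishable; what is observed is therefore always the macrostate, never the microstate.

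The hard part is not the mathematics---which reduces to the simultaneous diagonalizability of a commuting family of projectors---but securing the two physical premises on which everything rests: that every measurement genuinely reduces to distinguishing macroscopic pointer states, and that the relevant macro-observables commute. I would defend these as empirical adequacy claims rather than theorems, appealing to the spin example and to the decohering-histories and many-worlds literature already cited, while flagging that the universality of the reduction to commuting macro-observations is the assumption a skeptical reader is most likely to contest.
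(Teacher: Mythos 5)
Your proposal matches the paper's treatment: the paper offers no formal proof of Observation \ref{obs:macrostates}, but justifies it exactly as you do --- by the empirical premise that every measurement terminates in reading a macroscopic pointer, the representation of macrostates as blocks $\obs{P}_\alpha\hilbert$ of a commuting projector family illustrated by the spin examples \eqref{eq:spin_pointer_z}--\eqref{eq:spin_pointer_x}, and appeal to the decohering-histories and many-worlds literature for empirical adequacy. Your closing remark that each block generically contains many microstates is also consistent with the paper's own note (in Observation \ref{obs:ontic-basis}) that otherwise macrostates would coincide with microstates, so you have correctly identified both the argument and its status as an empirical-adequacy claim rather than a theorem.
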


Since ultimately every measurement translates into an observation represented by the macro projectors, there is a universal basis for all measurements, which diagonalizes all macro projectors.
This is not true for subsystems, but this is irrelevant, since any measurement is ultimately one of macrostates.

\begin{observation}
\label{obs:ontic-basis}
For the entire universe, whose states are represented by vectors in a Hilbert space $\hilbert$, there is a universal basis
\begin{equation}
\label{eq:ontic_basis}
\(\ket{\phi}\)_{\phi\in\mc{C}}
\end{equation}
compatible with the macrostates. In general, more such bases exist, otherwise the macrostates would coincide with the microstates. 
\end{observation}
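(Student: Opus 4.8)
The plan is to read ``compatible with the macrostates'' as the requirement that each basis vector $\ket{\phi}$ be a simultaneous eigenvector of every macro projector $\obs{P}_\alpha$, i.e. that each $\ket{\phi}$ lie entirely inside a single macrostate subspace $\obs{P}_\alpha\hilbert$. The enabling fact, established in the preceding discussion, is that all macro-observables commute, so the full family $(\obs{P}_\alpha)_{\alpha\in\mc{A}}$ is a \emph{single} commuting, orthogonal resolution of the identity; the \emph{universality} of the basis is then nothing more than simultaneous diagonalizability of this one family. The task thus reduces to exhibiting such a simultaneous eigenbasis and then counting how many compatible ones there are.

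First I would use, as recorded in the definition of macrostates, that $\boplus_{\alpha\in\mc{A}}\obs{P}_\alpha\hilbert=\hilbert$ is an \emph{orthogonal} direct sum: the images $\obs{P}_\alpha\hilbert$ are mutually orthogonal (since $\obs{P}_\alpha\obs{P}_\beta=0$ for $\alpha\neq\beta$) and together span $\hilbert$. For each $\alpha$ I choose an orthonormal basis $B_\alpha$ of the subspace $\obs{P}_\alpha\hilbert$ and set $\mc{C}=\bsqcup_{\alpha\in\mc{A}}B_\alpha$. Then $(\ket{\phi})_{\phi\in\mc{C}}$ is orthonormal — two vectors in the same $B_\alpha$ by construction, two vectors from different summands because the subspaces are orthogonal — and complete, since any $\ket{\psi}\in\hilbert$ decomposes as $\ket{\psi}=\sum_\alpha\obs{P}_\alpha\ket{\psi}$ with each $\obs{P}_\alpha\ket{\psi}$ in $\overline{\Span}\,B_\alpha$. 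By construction every $\ket{\phi}$ sits in exactly one macrostate subspace, so the basis is compatible with the macrostates.

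For the non-uniqueness clause I would observe that a compatible basis is never unique once some macrostate carries more than one microstate: if $\dim\obs{P}_\alpha\hilbert\geq 2$ for some $\alpha$, then any unitary acting inside $\obs{P}_\alpha\hilbert$ while fixing the other summands sends a compatible basis to a different compatible basis, giving a continuum of them. Conversely, if every $\obs{P}_\alpha\hilbert$ is one-dimensional, the compatible basis is determined up to phases, and the basis vectors are then fixed by the macrostates themselves — precisely the degenerate case ``macrostates coincide with microstates'' that the statement sets aside. Hence genuine multiplicity of compatible bases is equivalent to at least one macrostate being realizable by more than one microstate.

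The main obstacle is not the finite-dimensional bookkeeping but the choice of the bases $B_\alpha$ when the summands are infinite-dimensional or non-separable: existence of an orthonormal (Hilbert) basis there rests on Zorn's lemma, and one must state explicitly that ``basis'' means an orthonormal basis, not a Hamel basis. A separate and subtler point — flagged in the abstract but \emph{not} part of this existence claim — is that the physically useful compatible basis should be a \emph{continuous} (improper, position-like) basis; securing that needs the rigged-Hilbert-space framework and is deferred, whereas the content of this observation is only the existence and generic non-uniqueness of a macrostate-compatible orthonormal basis.
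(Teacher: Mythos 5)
Your proposal is correct and follows essentially the same route as the paper: the paper's justification is precisely that all macro projectors commute and resolve the identity, hence admit a simultaneous diagonalizing (compatible) basis, with non-uniqueness following because macrostate subspaces generically have dimension greater than one. You merely spell out the standard construction (an orthonormal basis $B_\alpha$ per subspace $\obs{P}_\alpha\hilbert$, unions for existence, intra-subspace unitaries for multiplicity), and your closing caveat correctly notes that the continuous-basis refinement is deferred to the paper's Section \ref{s:born} rather than being part of this observation.
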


It may seem too much to account for states of the entire universe just to explain the probabilities of the measurement of a single particle.
But in fact we always do this, because the observed particle can be entangled with any other system in the universe, so when we measure it, we measure in fact the state of the universe.
The usual separation between the observed system and the rest of the universe that enters in our theoretical description is an idealization that may make us not see the forest for the trees.
Therefore,

\begin{observation}
\label{obs:single_state}
The state of the universe is a single quantum state, not the composition of the states of the subsystems.
\end{observation}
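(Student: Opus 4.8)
The plan is to make precise the sense in which the universal state fails to be a ``composition'' of subsystem states, and then to argue that this failure is both generic and physically realized. Write the Hilbert space of the universe as a tensor product $\hilbert=\botimes_{i}\hilbert_i$ over its degrees of freedom. By a \emph{composition of the states of the subsystems} I mean a product (unentangled) vector $\ket{\Psi}=\botimes_i\ket{\psi_i}$, since this is the only form in which assigning an individual state $\ket{\psi_i}$ to each subsystem reproduces the whole. Observation \ref{obs:single_state} then amounts to the claim that the physical state of the universe is not of this product form, so that it can be described only as a single vector in $\hilbert$.

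First I would establish that product states are nongeneric. The set of unit product vectors is the image of the Segre embedding; it is a nowhere-dense subset of the unit sphere of $\hilbert$ and has measure zero with respect to the unitarily invariant measure. Next I would add the dynamical input: any interaction Hamiltonian containing a coupling term $\obs{A}_i\otimes\obs{B}_j$ between distinct subsystems $i\ne j$ fails to preserve the set of product states, so entanglement is generated under unitary evolution even starting from a product state. Together with the empirical fact that entanglement is realized in nature, as witnessed by interference and Bell-type correlations (\cf Difficulty \ref{difficulty:parallel}), this places the universal state off the product manifold.

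I would then make the ``single state'' content quantitative through reduced density operators. If $\ket{\Psi}$ is entangled across the cut separating subsystem $i$ from the rest, the reduced operator $\wh{\rho}_i=\tr_{\text{rest}}\proj{\Psi}$ is mixed, with $\tr(\wh{\rho}_i^2)<1$, so no subsystem possesses a pure state of its own. Moreover the family $(\wh{\rho}_i)_i$ does not determine $\ket{\Psi}$, because the correlations between subsystems are discarded by the partial traces. Hence the full physical information resides only in the single global vector, which is exactly what the observation asserts.

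The main obstacle is the passage from a statement of mathematical genericity to a claim about the actual universe: being measure zero does not by itself force the physical state to be entangled. I therefore expect the dynamical generation of entanglement, together with the empirical input, to be the load-bearing part of the argument rather than the measure-theoretic count. A secondary subtlety is to exclude the weaker reading in which ``composition'' is allowed to include classical mixtures $\sum_k p_k\botimes_i\proj{\psi_i^{(k)}}$; here one notes that such separable states are convex combinations that can never equal a pure entangled $\proj{\Psi}$, so even this enlarged notion of composition fails.
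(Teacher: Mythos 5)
Your proposal is correct, and its load-bearing idea --- entanglement across any subsystem cut is what prevents the universal state from being a composition of subsystem states --- is the same idea the paper relies on. The difference is one of elaboration: the paper's entire justification for Observation \ref{obs:single_state} is the informal sentence preceding it, namely that the observed particle can be entangled with any other system in the universe, so that measuring it is really measuring the state of the universe, and the separation into observed system plus the rest is an idealization. You formalize ``composition'' as a product vector $\botimes_i\ket{\psi_i}$ and then give three independent supports: genericity (product states form a measure-zero image of the Segre embedding), dynamics (coupling terms $\obs{A}_i\otimes\obs{B}_j$ generate entanglement), and the reduced-state analysis (each $\wh{\rho}_i$ is mixed, and the family $(\wh{\rho}_i)_i$ discards correlations, hence does not determine $\ket{\Psi}$). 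All of this is sound, and your diagnosis that genericity alone cannot carry a claim about the actual universe --- so the dynamical and empirical input must do the work --- is exactly the right distribution of weight; it is also, in unformalized form, the paper's own argument, which appeals to actual entanglement rather than to counting. What your route buys is precision, plus the exclusion of the separable-mixture reading, which the paper (working only with state vectors) never needs to address; what the paper's informal route buys is directness, since for its purpose --- justifying why probabilities must be assigned to the state of the whole universe rather than assembled from subsystem states --- the mere fact that the measured particle can be entangled with arbitrary other systems already suffices.
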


Now let us return to Difficulty \ref{difficulty:parallel}.
Despite mentioning a superposition of ``parallel microstates'', Difficulty \ref{difficulty:parallel} does not apply exclusively to the MWI, but also to standard quantum mechanics, because any state can be (and in general it is) in a superposition before the Projection Postulate is invoked. And even after that it remains in a superposition of microstates, as a succession of compatible measurements confirm it.

In a classical universe, when macrostates can be realized in more possible ways as microstates, probabilities arise from the fact that the agent does not know the microstate.
But if there would be more classical worlds, distributed by the same probability distribution as in the case when there is a unique world whose microstate is unknown, would there be any difference to the agent?
Would Condition \ref{cond:probability} be invalidated in such a universe?
To have an intuition of this question, consider an agent inhabiting a classical world, who tosses a coin, using a machine, so that the result is unpredictable to the agent.
If the result is heads, the agent finds out that the microstate of the world was one leading to heads.
Does this exclude the possibility of one or more classical universes in which the result was tail?
It definitely doesn't. So, while the agent may have learned something from tossing the coin, she didn't learn whether her classical world is unique or there are other classical worlds inaccessible to her.
The point here is not whether or not she should accept the existence of parallel classical worlds inaccessible to her, the point of the argument is that, just like a classical world is possible, more parallel classical worlds are possible too, and no experiment can tell which is the case, not even probabilistic experiments like tossing coins.
Both situations accommodate probabilities, the one-classical-world scenario allows probabilities coming from the ignorance of the microstate, and the many-classical-worlds scenario comes from the ignorance of the agent whether it is an instance of herself in one classical world or another, that is, her self-location.
This leads to the following:

\begin{observation}[Equivalence]
\label{obs:equivalence}
Probabilities for a given macrostate are independent of whether there is a unique microstate and the agent does not know it (\emph{epistemic probability}), or if it describes more coexisting microstates, and the agent does not know in which of these microstates it exists (\emph{self-location probability}).
\end{observation}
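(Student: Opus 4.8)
The plan is to reduce both scenarios to the same probability space of Definition \ref{def:probability-space} and then argue that the agent's credence is fixed entirely by her information, which is identical in the two cases. First I would model the situation by a single space $(\Omega,\mc{F},P)$, with $\Omega$ the set of microstates, $\mc{F}$ the $\sigma$-algebra generated by the macrostates, and $P$ the measure on $\Omega$. The agent is taken to know the macrostate $M\in\mc{F}$ but, by Observation \ref{obs:macrostates}, to possess no information that distinguishes the microstates lying inside $M$; this makes precise the situation of Observation \ref{obs:agents}.

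Second, I would evaluate the probability in each scenario. In the epistemic case there is a unique actual microstate $\omega\in M$ that the agent cannot identify, and Condition \ref{cond:probability} gives, for a favorable event $F$, the conditional ratio $P(F\cap M)/P(M)$. In the self-location case the microstates in $M$ coexist and the agent is uncertain which one she occupies; the self-location probability is once more the measure of the favorable locations divided by the total, namely the same ratio $P(F\cap M)/P(M)$. Hence the two assignments agree term by term, which is the claimed equivalence.

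The substantive step, which I expect to be the main obstacle, is to justify that the self-location credence is \emph{forced} to equal this ratio and is not merely compatible with it. The argument is that the agent's entire relevant information in the self-location case is the pair $(M,P)$: she knows the macrostate and the measure over microstates, but has nothing that breaks the symmetry among the indistinguishable copies compatible with $M$. Since this is exactly the information available in the epistemic case, any rule assigning credence as a function of information alone must return the same value. The delicacy is to state this supervenience of credence on information carefully enough that no quantum input is smuggled in, so that the equivalence does not become circular in the manner of the objections cited in Section \ref{s:intro}; I would stress that only the classical measure $P$ and indifference among indistinguishable microstates are used, placing the reduction logically prior to any specifically quantum assumption.
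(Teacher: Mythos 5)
Your proposal is correct and takes essentially the same route as the paper: the paper's own justification is precisely that an agent who knows only the macrostate has identical ignorance whether one unknown microstate exists or many coexist, which is your supervenience-of-credence-on-information step, with both evaluations reducing to the same ratio of measures per Condition \ref{cond:probability}. The only presentational difference is that the paper makes explicit, immediately after the observation, the assumption you use tacitly when speaking of which microstate the agent ``occupies''---namely that each instance of an agent supervenes on a single microstate, later formalized as Condition \ref{cond:correspondence}.
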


We can modify the example of a classical universe with many microstates evolving in parallel, by allowing the evolution to ``scramble'' the microstates, in the sense that the dynamics cannot be applied to individual microstates, but only to sets of such states, so that the measure describing how are they distributed is conserved. For example, a single microstate may evolve into multiple microstates, so that the density is preserved.
If we compare such a classical universe with one with the same state space, but so that only one microstate exists, and it evolves indeterministically according to the same measure, Observation \ref{obs:equivalence} would still apply.

An implicit assumption underlying Observation \ref{obs:equivalence} is that an agent or observer \emph{supervenes} (in the sense that its states depends) on a single microstate, even if there are more microstates coexisting in parallel.
But if there can be more parallel microstates, an additional condition is needed:
\begin{condition}[Correspondence]
\label{cond:correspondence}
If there are more parallel microstates, and at a given time different instances of an agent exist in more of them, each instance of the agent supervenes on only one of these microstates.
\end{condition}

In other words, the physical microstates should be able to support ontologically the existence of agents or observers, so that their experience of probabilities depends on their ignorance of the microstate.

Despite Difficulty \ref{difficulty:nonunique-probability-space}, the existence of a basis as in Observation \ref{obs:ontic-basis} will turn out to make it possible for quantum mechanics to satisfy Condition \ref{cond:correspondence}. This requires
\begin{principle}
\label{pp:correspondence}
In the quantum universe, there is a basis as in Observation \ref{obs:ontic-basis}, so that all instances of an agent can be realized only in microstates from that basis.
We will call it \emph{ontic basis} and its elements \emph{ontic states}.
\end{principle}

This, as we will see, allows probabilities in quantum mechanics to be just like classical probabilities, provided that we apply them to the entire universe.

Another difficulty in quantum mechanics is that the state is found, after wavefunction collapse or decoherence, to be $\obs{P}_{\alpha}\ket{\psi}/|\obs{P}_{\alpha}\ket{\psi}|$. The Born rule prescribes the probability that the state vector becomes $\obs{P}_{\alpha}\ket{\psi}/|\obs{P}_{\alpha}\ket{\psi}|$ is given by $\bra{\psi}\obs{P}_{\alpha}\ket{\psi}$, as in \eqref{eq:born_rule}. In the ontic basis \eqref{eq:ontic_basis}, $\ket{\psi}$ is decomposed as a linear combination with distinct coefficients (amplitudes) $\braket{\phi}{\psi}$, so
\begin{equation}
\label{eq:proj_ontic_decomp}
\tn{Prob}(\alpha)=\bra{\psi}\obs{P}_{\alpha}\ket{\psi}=\int_{\mc{C}_{\alpha}}\abs{\braket{\phi}{\psi}}^2d\mu(\phi),
\end{equation}
where $\mc{C}_{\alpha}=\{\phi\in\mc{C}|\ket{\phi}\in\obs{P}_{\alpha}\hilbert\}$ and $\mu$ is the measure on $\mc{C}$. While $\abs{\braket{\phi}{\psi}}^2$ gives a measure, this is not sufficient to interpret it probabilistically:
\begin{difficulty}
\label{difficulty:measure}
Just because it is a measure, it does not mean that it is a probability.
\end{difficulty}

In Sec. \sref{s:born} we will see that Principle \ref{pp:correspondence} allows Condition \ref{cond:correspondence} to be satisfied in quantum mechanics, and Difficulties \ref{difficulty:parallel}, \ref{difficulty:nonunique-probability-space}, and \ref{difficulty:measure} to be avoided, so by Observation \ref{obs:equivalence}, Condition \ref{cond:probability} is satisfied too.

\section{Derivation of the Born rule}
\label{s:born}

Before proving the main result, let us motivate it.

Consider a state vector of the form
\begin{equation}
\label{eq:finite_case_psi}
\ket{\psi}=\frac{1}{\sqrt{n}}\sum_{k=1}^n\ket{\phi_k}.
\end{equation}
where $(\ket{\phi_k})_{k\in\{1,\ldots,n\}}$ are orthonormal vectors from $\hilbert$.
Then, if every $\ket{\phi_k}$ is an eigenvector of the operator $\obs{A}$ representing the observable, the Born rule simply coincides with counting basis states:
\begin{equation}
\label{eq:finite_case_proof}
\begin{aligned}
\bra{\psi}\obs{P}_j\ket{\psi}
&=\frac{1}{n}\(\sum_{k=1}^n\bra{\phi_k}\)\(\obs{P}_j\sum_{k=1}^n\ket{\phi_k}\)\\
&=\frac{1}{n}\sum_{\ket{\phi_k}\in\obs{P}_j\hilbert}\braket{\phi_k}{\phi_k} = \frac{n_j}{n},
\end{aligned}
\end{equation}
where $\obs{P}_j$ is the projector of the eigenspace corresponding to the eigenvalue $\lambda_j$, and $n_j$ is the number of basis vectors $\ket{\phi_k}$ that are eigenvectors for $\lambda_j$.

In the MWI, this seems to be the only situation when ``naive branch counting'' coincides with the Born rule. It is used as a starting point for arguments that the Born rule is present in the MWI for ``less naive'' counting rules such as \cite{Deutsch1999QuantumTheoryOfProbabilityAndDecision,Wallace2002QuantumProbabilitiesAndDecisionRevisited,Vaidman2012ProbabilityInMWI,Saunders2021BranchCountingInTheEverettInterpretationOfQuantumMechanics}.

Eq. \eqref{eq:finite_case_proof} satisfies Condition \ref{cond:probability}, but unfortunately it seems to work only for very special state vectors.
We cannot make it work for all vectors in a finite-dimensional Hilbert space, since the eigenbasis vectors either have to contribute to eq. \eqref{eq:finite_case_psi} with the same absolute value $1/\sqrt{n}$ of the coefficient $\braket{\phi_k}{\psi}$, or to be absent.

Interestingly, as if by magic, the idea works in the continuous case without problems, because the basis vectors can be distributed with nonuniform density, making it possible for the continuous version of eq. \eqref{eq:finite_case_psi} to apply to any state vector.

To see this, we have to show that we can take a kind of continuous limit of eq. \eqref{eq:finite_case_psi} while keeping the vectors from \eqref{eq:finite_case_psi} distinct. It is important to keep them distinct, because the finite case proof from eq. \eqref{eq:finite_case_proof} only works when they are distinct. While this is a very severe limitation in finite-dimensional Hilbert spaces, it works in the infinite-dimensional case, in a continuous basis.

Let $\mc{C}$ be a topological manifold with a measure $\mu$ on its $\sigma$-algebra, and $\hilbert:=L^2(\mc{C},\mu,\C)$ the Hilbert space of square-integrable complex functions on $\mc{C}$.
Let $(\ket{\phi})_{\phi\in\mc{C}}$ be an orthogonal basis of $\hilbert$, so that
\begin{equation}
\label{eq:basis}
\int_{\mc{C}}\braket{\phi}{\phi'} \psi(\phi')d\mu(\phi') = \psi(\phi)
\end{equation}
for any compact-supported continuous function $\psi\in \hilbert$.

Suppose that the projectors $(\obs{P}_{\alpha})_{\alpha\in\mc{A}}$ associated to the possible results of the measurements are \emph{compatible} with the basis $(\ket{\phi})_{\phi\in\mc{C}}$, that is, each subspace $\obs{P}_{\alpha}\hilbert$ is spanned by the vectors $\ket{\phi}$, $\phi\in\mc{C}_{\alpha}$, for some $\mu$-measurable set $\mc{C}_{\alpha}\subseteq\mc{C}$.

In the following, by \emph{continuous basis} we understand one that is a common eigenbasis for a complete set of commuting observables with continuous spectra.

\begin{theorem}
\label{thm:born_rule_counting}
Let $(\ket{\phi})_{\phi\in\mc{C}}$ be a continuous basis compatible with the projectors associated to the possible results.
Then, for any state vector $\ket{\psi}$, the continuous limit of counting basis vectors gives the Born rule as the ratio of the measure of favorable states of the system to the measure of its total possible states.
\end{theorem}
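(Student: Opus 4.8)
The plan is to reduce the statement to the measure identity \eqref{eq:proj_ontic_decomp} and then reinterpret that identity as a classical ratio of measures, exactly in the sense of Condition \ref{cond:probability}, before exhibiting it as the continuous limit of the naive counting computation \eqref{eq:finite_case_proof}. First I would unpack the reproducing property \eqref{eq:basis}: it says precisely that $\braket{\phi}{\phi'}$ is the reproducing kernel of $\hilbert=L^2(\mc{C},\mu,\C)$, equivalently that $\int_{\mc{C}}\ketbra{\phi}{\phi}\,d\mu(\phi)=\mathbb{1}$ and $\braket{\phi}{\psi}=\psi(\phi)$. Under the compatibility hypothesis each $\obs{P}_\alpha$ is then nothing but multiplication by the indicator of $\mc{C}_\alpha$, i.e. $\obs{P}_\alpha=\int_{\mc{C}_\alpha}\ketbra{\phi}{\phi}\,d\mu(\phi)$, which is manifestly a self-adjoint idempotent with range spanned by $\ket{\phi}$, $\phi\in\mc{C}_\alpha$. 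Sandwiching between $\bra{\psi}$ and $\ket{\psi}$ yields \eqref{eq:proj_ontic_decomp} with no further work.

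The substance is in reading \eqref{eq:proj_ontic_decomp} as a probability. I would introduce the measure $\nu$ on the $\sigma$-algebra of $\mc{C}$ defined by $d\nu(\phi)=\abs{\braket{\phi}{\psi}}^2\,d\mu(\phi)$. Normalization of $\ket{\psi}$ gives $\nu(\mc{C})=1$, so that
\begin{equation}
\Prob(\alpha)=\bra{\psi}\obs{P}_\alpha\ket{\psi}=\nu(\mc{C}_\alpha)=\frac{\nu(\mc{C}_\alpha)}{\nu(\mc{C})}.
\end{equation}
This is literally Condition \ref{cond:probability}: the Born probability is the $\nu$-measure of the favorable ontic states $\mc{C}_\alpha$ divided by the $\nu$-measure of all ontic states, with $\nu$ playing the role of the distribution of microstates. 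Combined with Observation \ref{obs:equivalence}, this already delivers the stated interpretation at the level of measures.

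To recover it as a \emph{limit of counting}, and so connect with \eqref{eq:finite_case_psi}--\eqref{eq:finite_case_proof}, I would discretize $\nu$. For each $N$, choose points $\phi^{(N)}_1,\dots,\phi^{(N)}_N\in\mc{C}$ whose empirical measure $\frac1N\sum_k\delta_{\phi^{(N)}_k}$ converges weakly to $\nu$; concretely one may pull back a uniform grid through a measure-preserving reparametrization that sends $\nu$ to a uniform measure (the inverse-CDF map in one dimension, its multivariate analogue in general). These points are generically distinct, so the finite computation \eqref{eq:finite_case_proof} applies verbatim to the uniform superposition $\frac{1}{\sqrt N}\sum_k\ket{\phi^{(N)}_k}$, and the counting ratio $n_\alpha/N$ --- the fraction of sampled points landing in $\mc{C}_\alpha$ --- equals the finite Born value at each stage. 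Weak convergence then gives $n_\alpha/N\to\nu(\mc{C}_\alpha)=\Prob(\alpha)$ as $N\to\infty$. The key conceptual point, which I would emphasize, is that the nonuniform weight $\abs{\psi}^2$ is carried entirely by the \emph{spacing} of the basis points, leaving the coefficients uniform, which is exactly what frees the argument from the rigidity of the finite-dimensional case.

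I expect the obstacles to be concentrated in this last step. The coefficients $\braket{\phi}{\psi}$ are complex, whereas \eqref{eq:finite_case_psi} uses equal positive coefficients; I would retain the phases $e^{i\arg\psi(\phi_k)}$ so that $\frac{1}{\sqrt N}\sum_k e^{i\arg\psi(\phi_k)}\ket{\phi^{(N)}_k}\to\ket{\psi}$, observing that the counting ratio depends only on the moduli and is therefore insensitive to them (the abstract's global-$U(1)$ remark removes only the overall phase, not the relative ones). The genuinely analytic points are the existence of the density-weighted discretization for a general manifold $\mc{C}$, which needs $\nu$ to be nonatomic so that it is measure-isomorphic to a uniform measure, and the boundary condition $\nu(\partial\mc{C}_\alpha)=0$, without which weak convergence only bounds $\nu(\mc{C}_\alpha)$ from the open and closed sides. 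Everything before this --- the derivation of \eqref{eq:proj_ontic_decomp} and its rewriting as a ratio --- is formal algebra; the real content is making ``the continuous limit of counting basis vectors'' precise.
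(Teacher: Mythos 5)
Your opening steps --- deriving \eqref{eq:proj_ontic_decomp} from the reproducing property \eqref{eq:basis} and rewriting it as $\nu(\mc{C}_\alpha)/\nu(\mc{C})$ with $d\nu=\abs{\braket{\phi}{\psi}}^2 d\mu$ --- are correct and match the measure-theoretic content of the paper. The genuine gap is in your discretization, which is exactly where the theorem's content lies. The elements $\ket{\phi}$ of a continuous basis are delta-normalized: by \eqref{eq:basis} the kernel $\braket{\phi}{\phi'}$ acts as a Dirac delta with respect to $\mu$, so $\braket{\phi}{\phi}$ is not $1$ but divergent, and $\ket{\phi}$ is not an element of $\hilbert$ at all. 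Consequently $\frac{1}{\sqrt N}\sum_k e^{i\arg\psi(\phi_k)}\ket{\phi^{(N)}_k}$ is not a unit vector, the finite computation \eqref{eq:finite_case_proof} --- whose middle step uses $\braket{\phi_k}{\phi_k}=1$ --- does not apply to it ``verbatim'', and your claimed convergence to $\ket{\psi}$ is false: no finite linear combination of Dirac deltas lies in $L^2(\mc{C},\mu,\C)$, let alone converges in norm to $\psi$. What weak convergence of the empirical measures does give you is $n_\alpha/N\to\nu(\mc{C}_\alpha)$, but that is a statement purely about measures on $\mc{C}$; it never ties the counting to a decomposition of the quantum state $\ket{\psi}$, which is what the theorem asserts.

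The paper's proof repairs precisely this point by counting \emph{cells} rather than points. It partitions the support $D$ of $\psi$ into $2^n$ measurable sets $D_{n,k}$ of equal $\mu'$-measure (where $\mu'=\abs{\psi}^2\mu$ is your $\nu$), each successive partition refining the previous one, and forms the normalized cell vectors
\begin{equation}
\ket{n,k}=\sqrt{2^n}\int_{D_{n,k}}\psi(\phi)\,\ket{\phi}\,d\mu(\phi),
\end{equation}
which \emph{are} mutually orthogonal unit vectors of $\hilbert$ (orthogonal because their supports are disjoint, normalized because $\mu'(D_{n,k})=2^{-n}$), and which satisfy $\ket{\psi}=2^{-n/2}\sum_{k}\ket{n,k}$ \emph{exactly} at every finite stage $n$. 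So the finite-case counting \eqref{eq:finite_case_proof} genuinely applies at each stage; the limit $n\to\infty$ is needed only so that the cells align with the sets $\mc{C}_\alpha$, as in \eqref{eq:partitions-invalid-negligible} and \eqref{eq:finite_case_psi_partition_converge}. To fix your argument you would have to replace each sampled point by such an equal-$\nu$-measure cell (equivalently, smear it into a normalized wave packet supported there), at which point you have reproduced the paper's construction. Incidentally, the phase issue you flag is handled automatically there: the phase of $\psi$ is carried inside the cell vectors $\ket{n,k}$ (and later, in Proposition \ref{thm:born_rule_counting_measure}, absorbed into a redefinition of the basis), so nothing about $\arg\psi$ needs to be uniform.
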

\begin{proof}
Let $\ket{\psi}\in\hilbert$ be a unit vector. The wavefunction $\psi$ defined by $\psi(\phi)=\braket{\phi}{\psi}$ is $\mu$-measurable, so it has measurable support $D\subseteq\mc{C}$. Since the function $\varrho:\mc{C}\to\R$, $\varrho(\phi):=\psi^\ast(\phi)\psi(\phi)$ is measurable and positive, it defines a measure $\mu':=\varrho\mu$ on $D$, and since $\braket{\psi}{\psi}=1$, $\mu'(D)=1$. For any $n\in\N$, we choose a partition $\mc{D}_n=\{D_{n,1},\ldots,D_{n,2^n}\}$ of $D$ into $2^n$ measurable sets of equal measure $\mu'(D_{n,k})=\mu'(D)/2^n=1/2^n$ for all $k\in\{1,\ldots,2^n\}$, so that $\mc{D}_{n+1}$ is a \emph{refinement} of $\mc{D}_n$ (that is, every set from $\mc{D}_n$ is the union of some sets from $\mc{D}_{n+1}$). 
Then, for any $n$, the unit vectors
\begin{equation}
\label{eq:finite_case_psi_partition}
\ket{n,k}:=\sqrt{2^n}\int_{D_{n,k}}\psi(\phi)d\mu(\phi),
\end{equation}
where $k\in\{1,\ldots,2^n\}$, are mutually orthogonal, and
\begin{equation}
\label{eq:finite_case_psi_partition_vectors}
\ket{\psi}=\frac{1}{\sqrt{2^n}}\sum_{k=1}^{2^n}\ket{n,k}.
\end{equation}

The first thing to notice is that we can refine $\ket{\psi}$ from eq. \eqref{eq:finite_case_psi_partition_vectors} as much as we want. This is not possible in the finite-dimensional case, because $2^n$ cannot exceed the dimension of the Hilbert space.

We define, for each $n$ and $\alpha$, the following set indexing the sets consistent with $\mc{C}_{\alpha}$ from each partition $\mc{D}_n$,
\begin{equation}
\label{eq:partitions-invalid}
M_{n,\alpha}:=\{k\in\{1,\ldots,2^n\}|\mu'(D_{n,k}\setminus\mc{C}_{\alpha})=0\}.
\end{equation}

In particular, $\mu'(D_{n,k}\setminus\mc{C}_{\alpha})=0$ is ensured when $D_{n,k}\subseteq\mc{C}_{\alpha}$; in general, $D_{n,k}$ should be ``almost'' included in $\mc{C}_{\alpha}$, except for a zero-measure remaining subset of $D_{n,k}$.

The partitions $\mc{D}_n$ can be chosen so that, as $n\to\infty$, the sets consistent with $\mc{C}_{\alpha}$ approximate $\mc{C}_{\alpha}\cap D$ in the measure $\mu'$, \ie for any $\alpha\in\mc{A}$,
\begin{equation}
\label{eq:partitions-invalid-negligible}
\lim_{n\to\infty}\sum_{k\in M_{n,\alpha}}\mu'(D_{n,k})=\mu'(\mc{C}_{\alpha}\cap D).
\end{equation}

Then, for any $\alpha\in\mc{A}$,
\begin{equation}
\label{eq:finite_case_psi_partition_converge}
\lim_{n\to\infty}\frac{1}{\sqrt{2^n}}\sum_{k\in M_{n,\alpha}}\ket{n,k}=\obs{P}_{\alpha}\ket{\psi}.
\end{equation}

Each partition $\mc{D}_n$ is equivalent to a set of projectors $\mc{P}_n$ that form a partition of the identity and are compatible with the ontic basis, so that each $\mc{P}_{n+1}$ refines $\mc{P}_n$, $\mc{P}_n$ determines a decomposition \eqref{eq:finite_case_psi_partition}, and $\mc{P}_n$ converges to a refinement of $(\obs{P}_{\alpha})_{\alpha\in\mc{A}}$.

From equations \eqref{eq:partitions-invalid-negligible} and \eqref{eq:finite_case_psi_partition_converge} it follows that counting the unit vectors consistent with each macro-projector converges to the Born rule.
Therefore, in the continuous limit, the Born rule results from ``counting'' the basis vectors $(\ket{\phi})_{\phi\in\mc{C}}$, as the ratio of the measure of favorable states of the system to the measure of its total possible states, where the density is $\psi^\ast(\phi)\psi(\phi)d\mu(\phi)$.
\end{proof}

\begin{observation}[Probability]
\label{obs:probability}
If Principle \ref{pp:correspondence} is assumed in quantum mechanics, Theorem \ref{thm:born_rule_counting} shows that the density of the ontic states satisfies the Born rule for the macro-observables $\obs{P}_{\alpha}$, $\alpha\in\mc{A}$.
This allows Condition \ref{cond:correspondence} to be satisfied, and by Observation \ref{obs:equivalence}, Condition \ref{cond:probability} is satisfied too, despite Difficulties \ref{difficulty:parallel}--\ref{difficulty:measure}, according to the Born rule.
\end{observation}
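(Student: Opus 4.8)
The plan is to chain the already-established Theorem~\ref{thm:born_rule_counting} with Principle~\ref{pp::correspondence} and Observation~\ref{obs:equivalence}, rather than to recompute anything. First I would invoke Principle~\ref{pp::correspondence} to fix, once and for all, a single universal ontic basis $(\ket{\phi})_{\phi\in\mc{C}}$ for the whole universe. By Observation~\ref{obs:ontic-basis} this basis can be chosen compatible with the macro-projectors, since the macro-observables commute and $\boplus_{\alpha\in\mc{A}}\obs{P}_{\alpha}\hilbert=\hilbert$; concretely, each $\obs{P}_{\alpha}\hilbert$ is spanned by the $\ket{\phi}$ with $\phi\in\mc{C}_{\alpha}$ for measurable $\mc{C}_{\alpha}\subseteq\mc{C}$, which is exactly the compatibility hypothesis of Theorem~\ref{thm:born_rule_counting}. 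This is the step that converts the abstract existence claim of Observation~\ref{obs:ontic-basis} into the precise input the theorem requires.

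With the hypothesis in place, I would apply Theorem~\ref{thm:born_rule_counting} directly to the universal state $\ket{\psi}$ (a single state, by Observation~\ref{obs:single_state}): the density $\varrho(\phi)=\psi^{\ast}(\phi)\psi(\phi)$ defines the measure $\mu'=\varrho\mu$ with $\mu'(D)=1$, and the theorem yields $\bra{\psi}\obs{P}_{\alpha}\ket{\psi}=\mu'(\mc{C}_{\alpha}\cap D)$ as the ratio of the measure of ontic states consistent with outcome $\alpha$ to the measure of all ontic states in the support $D$. This establishes the first sentence of the Observation: the density of the ontic states satisfies the Born rule for the macro-observables $\obs{P}_{\alpha}$, $\alpha\in\mc{A}$.

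Next I would verify Condition~\ref{cond:correspondence}. Because the ontic basis is compatible with the macro-projectors, the sample space decomposes, up to $\mu'$-null sets, as a disjoint union of the $\mc{C}_{\alpha}$, so every ontic state carries a definite macro-outcome. Principle~\ref{pp::correspondence} then guarantees that each instance of an agent supervenes on exactly one such ontic state rather than on the superposition, which is precisely the content of Condition~\ref{cond:correspondence}. This is the crucial bridge: it turns the purely measure-theoretic ratio of the previous paragraph into self-location ignorance about which single ontic state the agent occupies among those distributed with density $\varrho$. Invoking Observation~\ref{obs:equivalence}, this self-location probability is equivalent to an ordinary epistemic probability, so $\mu'$ may be read as a genuine probability function in the sense of Definition~\ref{def:probability-space}, and Condition~\ref{cond:probability} holds with favorable outcomes $\mc{C}_{\alpha}\cap D$ and total outcomes $D$.

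Finally I would dispatch the three difficulties. Difficulty~\ref{difficulty:parallel} dissolves because there is a single universal state and each agent-instance supervenes on one ontic state, so the parallel coexistence of branches no longer blocks a classical reading. Difficulty~\ref{difficulty:nonunique-probability-space} dissolves because the ontic basis is fixed for the whole universe independently of any local measurement setting, every measurement reducing to the commuting macro-observables, so the sample space $\mc{C}$ does not change with, say, the axis along which a spin is measured. The hard part, and the step I would treat most carefully, is Difficulty~\ref{difficulty:measure}: showing that $\mu'$ is not merely a measure but a probability. This is not a computation but rests entirely on the conjunction of Principle~\ref{pp::correspondence} and Observation~\ref{obs:equivalence}, which together supply the self-location interpretation that licenses the identification; the delicate point is precisely the assumption that agents can be realized only on single ontic states, so that their ignorance of which one they inhabit is what makes the Born measure a genuine probability.
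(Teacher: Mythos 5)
Your proposal is correct and takes essentially the same route as the paper: this Observation carries no separate proof there precisely because it \emph{is} the synthesis you reconstruct---Principle \ref{pp::correspondence} converts Observation \ref{obs:ontic-basis} into the compatibility hypothesis of Theorem \ref{thm:born_rule_counting}, the theorem delivers the Born rule as the measure ratio (your identification $\bra{\psi}\obs{P}_{\alpha}\ket{\psi}=\mu'(\mc{C}_{\alpha}\cap D)$ agrees with eq.~\eqref{eq:proj_ontic_decomp}), and Condition \ref{cond:correspondence} together with Observation \ref{obs:equivalence} then upgrades that measure to a genuine self-location probability satisfying Condition \ref{cond:probability}, exactly as the paper indicates at the end of Sec.~\ref{s:probabilities}. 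Your handling of Difficulties \ref{difficulty:parallel}--\ref{difficulty:measure}, including treating Difficulty \ref{difficulty:measure} as resting on Principle \ref{pp::correspondence} plus Observation \ref{obs:equivalence} rather than on any computation, matches the paper's intended argument.
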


For any physically realistic quantum measurement there is a continuous basis in which the observable is diagonal, as required by Theorem \ref{thm:born_rule_counting}. Even for a single particle in nonrelativistic quantum mechanics, the Hilbert space is infinite-dimensional, and admits continuous bases, \eg the position basis.

\begin{observation}
\label{obs:applicability}
All measurements satisfy, in practice, the continuity condition required for Theorem \ref{thm:born_rule_counting}, because the actual observation is that of the pointer.
\end{observation}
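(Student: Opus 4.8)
The plan is to treat Observation \ref{obs:applicability} as an empirical-adequacy claim rather than a formal theorem, reducing the hypothesis of Theorem \ref{thm:born_rule_counting} — the existence of a continuous basis $(\ket{\phi})_{\phi\in\mc{C}}$ in which the outcome projectors are diagonal — to the physical structure shared by every realizable measurement. First I would invoke the measurement-chain reasoning already established here: by Observation \ref{obs:macrostates} and the spin example of eqs. \eqref{eq:spin_pointer_z}--\eqref{eq:spin_pointer_x}, every quantum measurement terminates in a direct observation of a macroscopic record, so the outcome actually registered is a \emph{pointer configuration}, never the microscopic observable $\obs{A}$ of the probed system. Consequently the projectors $\obs{P}_\alpha$ whose probabilities the Born rule must reproduce are those distinguishing distinct pointer readings, and it is these, not the system's own observable, that must be compatible with a continuous basis.

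Next I would identify the pointer degree of freedom as a \emph{spatial}, hence continuous, variable. Whatever the apparatus — a needle on a dial, a droplet in a cloud chamber, a darkened grain on a plate, ink on a printout, or a firing neuron — the record is ultimately a localization in configuration space. In the wavefunctional description, the relevant Hilbert space for apparatus-plus-environment is $\hilbert=L^2(\mc{C},\mu,\C)$ with $\mc{C}$ the (field-)configuration manifold and $\mu$ its natural measure, and the continuous basis is the configuration basis $(\ket{\phi})_{\phi\in\mc{C}}$. The pointer readings then correspond to disjoint $\mu$-measurable regions $\mc{C}_\alpha\subseteq\mc{C}$, so the outcome projectors are multiplication by indicator functions $\mathbf{1}_{\mc{C}_\alpha}$, manifestly diagonal in the configuration basis. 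This is exactly the compatibility hypothesis $\obs{P}_\alpha\hilbert=\Span\{\ket{\phi}:\phi\in\mc{C}_\alpha\}$ demanded by Theorem \ref{thm:born_rule_counting}, and since macro-observables commute, a single such continuous basis simultaneously diagonalizes all pointer projectors, consistent with Observations \ref{obs:ontic-basis} and \ref{obs:single_state}.

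Having supplied $\mc{C}$, $\mu$, the continuous basis, and the compatible projectors, the hypotheses of Theorem \ref{thm:born_rule_counting} are met and its conclusion applies to the quantity actually measured; I would close by contrasting this with the finite-dimensional idealization, where an ideal projective measurement of a discrete observable admits no continuous compatible basis and genuinely falls outside the theorem (the obstruction noted after eq. \eqref{eq:finite_case_proof}), yet no such measurement is ever physically completed without amplification into a continuous pointer variable. The hard part is precisely the universality of this reduction: that \emph{every} realizable measurement, without exception, terminates in a spatial localization of its record, so that the outcome projectors are always indicators of configuration-space regions. This is not a theorem but a premise about physics, which I would defend by noting that readout, storage, and perception are all instantiated as localized matter configurations and that the amplification stage of any detector maps the microscopic distinction onto macroscopically separated positions; where it cannot be proved, I would state it explicitly as the empirical condition on which the applicability of Theorem \ref{thm:born_rule_counting} rests.
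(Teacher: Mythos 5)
Your proposal is correct and takes essentially the same route as the paper: the paper justifies this observation with the same measurement-chain argument---every measurement terminates in observing a pointer macrostate, which is a position-like and hence continuous degree of freedom admitting a continuous basis---made concrete in Example \ref{example:spin}, where a binary spin measurement is amplified into a continuous position readout on the screen, eq.~\eqref{eq:psi_uniform_spin}. Your extra explicitness (outcome projectors as indicators of regions $\mc{C}_\alpha$ diagonal in the configuration basis, and flagging the universality of the spatial-record reduction as an empirical premise rather than a theorem) matches the paper's own framing of the claim as holding ``in practice.''
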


The fact that all measurements are ultimately measurements of positions is advocated for a long time among supporters of Bohm's interpretation, see for example Bell {\cite{Bell2004OnTheProblemOfHiddenVariablesInQuantumMechanics}}, page 10,

\begin{quote}
``Other measurements are reduced ultimately to position measurements. For example, measurement of a spin component means observing whether the particle emerges with an upward or downward deflection from a Stern-Gerlach magnet.''
\end{quote}

The following example details this.
\begin{example}
\label{example:spin}
Consider a measurement of the spin of a particle, whose spin state is initially $\ket{\psi}_{s}=a\ket{\uparrow}_z+b\ket{\downarrow}_z$, where $\abs{a}^2+\abs{b}^2=1$.
The particle also has position degrees of freedom, so its state vector is in fact
\begin{equation}
\label{eq:spin-position}
\psi(\x,t)=\psi_u(\x,t)\ket{\uparrow}_z+\psi_d(\x,t)\ket{\downarrow}_z.
\end{equation}

At the initial time $t_0$, $\psi_u(\x,t_0)=a\psi_0(\x)$, $\psi_d(\x,t_0)=b\psi_0(\x)$, and $\braket{\psi_0}{\psi_0}=1$.
The measurement process consists of using a magnetic field to entangle the spin and the position of the particle, then the position where the deflected particle hits a screen or photographic plate is observed. After passing through the magnetic field, at time $t_1$, $\psi_u(\x,t_1)$ becomes restricted to a region $U$ of the screen, and $\psi_d(\x,t_1)$ to a region $D$. From the resulting position, the spin is inferred to be either ``up'' or ``down''.
The regions $U$ and $D$ of the screen are almost identical, but the norms of $\psi_u(\x,t_1)$ and $\psi_d(\x,t_1)$ are proportional to $\abs{a}^2$ and respectively $\abs{b}^2$.
We obtain, for the two possible regions $U$ and $D$,
\begin{equation}
\label{eq:psi_uniform_spin}
\begin{cases}
\ket{\psi_u,t_1}\ket{\uparrow}_z&=\int_{U} \abs{\psi_u(\x,t_1)} e^{i \theta_u(\x)}\ket{\x} d\x \\
\ket{\psi_d,t_1}\ket{\downarrow}_z&=\int_{D} \abs{\psi_d(\x,t_1)} e^{i \theta_d(\x)}\ket{\x} d\x. \\
\end{cases}
\end{equation}

Then, we invoke collapse or decoherence to explain why only one result is observed. 
This illustrates how, despite apparently making a binary measurement of a qubit, the actual basis is continuous, as required by Theorem \ref{thm:born_rule_counting}.
\qed
\end{example}

\begin{observation}
\label{obs:pp:correspondence-necessity}
In the proof of Theorem \ref{thm:born_rule_counting}, the projectors used to construct the unit vectors \eqref{eq:finite_case_psi_partition} are defined to be compatible with the ontic basis, in order to satisfy Principle \ref{pp:correspondence}. Principle \ref{pp:correspondence} is essential for Theorem \ref{thm:born_rule_counting}.

Any argument based on counting worlds or branch refinements while using the self-locations of the agents should count only the worlds defined by a fixed basis, as Principle \ref{pp:correspondence} states.
Otherwise, we would be forced to admit that any unit vector from any macrostate supports agents, and to count all such vectors as worlds.
This overcounting is inconsistent with the Born rule, as Proposition \ref{thm:overcoungting} will show.
\end{observation}

\begin{proposition}
\label{thm:overcoungting}
If in the proof of Theorem \ref{thm:born_rule_counting} all states from a macrostate are counted, the result contradicts the Born Rule.
\end{proposition}
\begin{proof}
For every $n$, let $\alpha\in\mc{A}$ be so that $\obs{P}_{\alpha}\ket{\psi}\neq 0$. Let $\obs{P}_{\alpha}\ket{\psi}=1/\sqrt{n}\sum_{k=1}^n\ket{\alpha,k}$ be an expansion of $\obs{P}_{\alpha}\ket{\psi}$ as a sum of equal-norm orthogonal vectors from $\hilbert_{\alpha}$, as in eq. \eqref{eq:finite_case_psi}, where $\hilbert_{\alpha}:=\obs{P}_{\alpha}\hilbert$.
We define $N_{n,\alpha}=\{k\in\{1,\ldots,n\}|\ket{n,k}\in\hilbert_{\alpha}\}$, and require that, as $n\to\infty$, $N_{n,\alpha}/n$ converges to $\bra{\Psi}\obs{P}_{\alpha}\ket{\Psi}$.

Then, any unitary transformation of $\hilbert_{\alpha}$ that preserves $\obs{P}_{\alpha}\ket{\psi}$ should transform any of these vectors into another one from an equally valid expansion, and the world supported by the resulting vector should be valid as well.
Denote by $S_{n,\alpha}$ the set of vectors obtained from $\ket{n,k}$ by all unitary transformations of $\hilbert_{\alpha}$ that preserve $\obs{P}_{\alpha}\ket{\Psi}$.
Denote by $\mf{p}(\mc{S})$ be the probability measure of a set $\mc{S}\subseteq\hilbert$ of state vectors counting as worlds.

Let $\alpha\neq\beta\in\mc{A}$ so that $|\obs{P}_{\alpha}\ket{\Psi}|=|\obs{P}_{\beta}\ket{\Psi}|\neq 0$.
Unitary symmetry ensures the existence of a unitary transformation $\obs{S}$ that maps the line $\C\obs{P}_{\beta}\ket{\Psi}\subset\hilbert_{\beta}$ to the line $\C\obs{P}_{\alpha}\ket{\Psi}\subset\hilbert_{\alpha}$, so that one of the following is true: $\obs{S}\hilbert_{\beta}=\hilbert_{\alpha}$, or $\obs{S}\hilbert_{\beta}\subsetneq\hilbert_{\alpha}$, or $\hilbert_{\alpha}\subsetneq\obs{S}\hilbert_{\beta}$.
Unitary symmetry that preserve the macrostates requires that $\mf{p}(\obs{S}\hilbert_{\beta})=\mf{p}(\hilbert_{\alpha})$.
By unitary symmetry, there are infinitely many transformations with the same properties. Let $\obs{S}'$ be another one, so that $\obs{S}'\hilbert_{\beta}\neq\obs{S}\hilbert_{\beta}$. Then, $\obs{S}\hilbert_{\beta}\cap\obs{S}'\hilbert_{\beta}$ is a strict subspace of $\obs{S}\hilbert_{\beta}$, hence $\mf{p}(\obs{S}\hilbert_{\beta}\cap\obs{S}'\hilbert_{\beta})=0$. It follows that $\mf{p}(\obs{S}'\hilbert_{\beta})=\mf{p}(\obs{S}'\hilbert_{\beta}\setminus\obs{S}\hilbert_{\beta})=\mf{p}(\obs{S}\hilbert_{\beta}\setminus\obs{S}'\hilbert_{\beta})=\mf{p}(\obs{S}\hilbert_{\beta})$. From this, $\mf{p}(\hilbert_{\alpha})>\mf{p}(\obs{S}\hilbert_{\beta})+\mf{p}(\obs{S}'\hilbert_{\beta})>\mf{p}(\obs{S}\hilbert_{\beta})=\mf{p}(\hilbert_{\beta})$.
But the Born rule requires that $\mf{p}(\hilbert_{\alpha})=\mf{p}(\hilbert_{\beta})$, which is satisfied only if $\obs{S}\hilbert_{\beta}=\hilbert_{\alpha}$ for every $\alpha\neq\beta\in\mc{A}$.

But if $\obs{S}\hilbert_{\beta}=\hilbert_{\alpha}$ for every $\alpha\neq\beta\in\mc{A}$, it can be shown that this contradicts the Born Rule for $|\obs{P}_{\alpha}\ket{\Psi}|>|\obs{P}_{\beta}\ket{\Psi}|$.
To see this, we need to calculate the angle $\omega_{n,\alpha}$ between $\ket{n,k'}$ and $1/\sqrt{n}\sum_{k\in N_{n,\alpha}}\ket{n,k}$ when $k'\in N_{n,\alpha}$, because it determines $S_{n,\alpha}$. It is given by $\cos\omega_{n,\alpha}=\abs{\bra{n,k'}1/\sqrt{n N_{n,\alpha}}\sum_{k\in N_{n,\alpha}}\ket{n,k}}=1/\sqrt{n N_{n,\alpha}}$.
Then, as $n\to\infty$, $\omega_{n,\alpha}\to \pi/2$, for all $\alpha$.
Therefore, as $n\to\infty$, $\mf{p}(S_{n,\alpha})/\mf{p}(S_{n,\beta})=1$.
So counting all vectors from the sets $S_{n,\alpha}$ as worlds contradicts the Born Rule.
\end{proof}

Theorem \ref{thm:born_rule_counting} can be seen as the continuous limit of the refined branch-counting method proposed in \cite{Saunders2021BranchCountingInTheEverettInterpretationOfQuantumMechanics}, amended with Principle \ref{pp:correspondence}. This amendment is demanded by Proposition \ref{thm:overcoungting}.

\section{Interpretation of the wavefunction}
\label{s:interpretation}

\subsection{Uniformization of the ontic states}
\label{s:density}

The continuous limit from the proof of Theorem \ref{thm:born_rule_counting} keeps the unit vectors orthogonal, so that we can interpret any state vector as the limit of a sum of the form \eqref{eq:finite_case_psi}, and by this, the basis vectors as distinct outcomes in the sample space (\cf Definition \ref{def:probability-space}).

Without loss of generality, for any given state vector $\ket{\psi}$ so that $\abs{\psi(\phi)}$ is a $\mu$-measurable function of $\phi$, we can assume that $\psi(\phi)\in\R$ for all $\phi$.
If not, substitute the basis by $\ket{\phi}\mapsto e^{i \theta(\phi)}\ket{\phi}$, where $\theta(\phi)$ is the phase appearing in the polar form of $\braket{\phi}{\psi}$, for all $\phi\in\mc{C}$.

\begin{proposition}
\label{thm:born_rule_counting_measure}
The state vector $\ket{\psi}$ has the form
\begin{equation}
\label{eq:psi_uniform}
\ket{\psi}=\int_{\mc{C}}\ket{\phi} d\wt{\mu}(\phi),
\end{equation}
where $\theta:\mc{C}\to\R$, and $\wt{\mu}$ is a measure on $\mc{C}$.

Any projector $\obs{P}_{\alpha}$ diagonal in the basis $(\ket{\phi})_{\phi\in\mc{C}}$ corresponds to a subset $\mc{C}_{\alpha}\subseteq \mc{C}$.
If $\mc{C}_{\alpha}$ is $\mu$-measurable,
\begin{equation}
\label{eq:born_rule_continuous}
\left|\int_{\mc{C}_{\alpha}}\ket{\phi} d\wt{\mu}(\phi)\right|^2=\int_{\mc{C}_{\alpha}}r^2(\phi)d\mu(\phi).
\end{equation}
\end{proposition}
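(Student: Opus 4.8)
The plan is to read off the decomposition of $\ket{\psi}$ in the continuous basis, recast its coefficient in polar form, gauge away the phase, and package the modulus into a new measure; the continuous Born-rule identity \eqref{eq:born_rule_continuous} then follows from the fact that $\obs{P}_{\alpha}$ is a self-adjoint idempotent together with the resolution identity \eqref{eq:basis}. Throughout I write $r(\phi):=\abs{\braket{\phi}{\psi}}$ for the modulus of the wavefunction.

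First I would write $\ket{\psi}=\int_{\mc{C}}\psi(\phi)\ket{\phi}\,d\mu(\phi)$ with $\psi(\phi)=\braket{\phi}{\psi}$, which is \eqref{eq:basis} applied to $\ket{\psi}$. Since $\psi$ is $\mu$-measurable, its polar decomposition $\psi(\phi)=r(\phi)e^{i\theta(\phi)}$ produces a measurable nonnegative modulus $r$ and a measurable phase $\theta:\mc{C}\to\R$, set to $0$ on the set where $r(\phi)=0$ (there the phase is immaterial, that set contributing nothing to the integral). Absorbing the phase by the substitution $\ket{\phi}\mapsto e^{i\theta(\phi)}\ket{\phi}$ --- a pointwise phase, hence a unitary that preserves orthogonality and norms --- turns the coefficient into the real nonnegative function $r$, so that $\ket{\psi}=\int_{\mc{C}}r(\phi)\ket{\phi}\,d\mu(\phi)$. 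Defining $\wt{\mu}:=r\,\mu$, that is $\wt{\mu}(E)=\int_{E}r\,d\mu$, then gives the form \eqref{eq:psi_uniform}. I would note that $\wt{\mu}$ is a genuine positive measure because $r\ge 0$ is measurable (possibly infinite when $\mu(\mc{C})=\infty$), while the vector integral still converges in $\hilbert$ because $\int_{\mc{C}}r^2\,d\mu=\braket{\psi}{\psi}=1$.

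For the second identity, compatibility of the basis with $\obs{P}_{\alpha}$ means that $\obs{P}_{\alpha}$ restricts the expansion to $\mc{C}_{\alpha}$, so $\obs{P}_{\alpha}\ket{\psi}=\int_{\mc{C}_{\alpha}}\ket{\phi}\,d\wt{\mu}(\phi)$. Because $\obs{P}_{\alpha}$ is a self-adjoint idempotent, $\abs{\obs{P}_{\alpha}\ket{\psi}}^2=\bra{\psi}\obs{P}_{\alpha}\ket{\psi}$, and evaluating this in the basis via \eqref{eq:proj_ontic_decomp} yields $\int_{\mc{C}_{\alpha}}\abs{\braket{\phi}{\psi}}^2\,d\mu(\phi)=\int_{\mc{C}_{\alpha}}r^2(\phi)\,d\mu(\phi)$, which is the right-hand side of \eqref{eq:born_rule_continuous}. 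Alternatively one can compute the squared norm directly as the double integral $\int_{\mc{C}_{\alpha}}\int_{\mc{C}_{\alpha}}r(\phi)r(\phi')\braket{\phi}{\phi'}\,d\mu(\phi)\,d\mu(\phi')$ and collapse it with the reproducing identity \eqref{eq:basis}.

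The main obstacle is not any single computation but the rigorous meaning of the continuous basis: the $\ket{\phi}$ are not elements of $\hilbert$ but rigged/distributional objects, so the integrals $\int\ket{\phi}\,d\wt{\mu}(\phi)$ and the identity \eqref{eq:basis} must be interpreted in that sense. The two points that genuinely need care are (i) the measurability of the phase gauge $e^{i\theta(\phi)}$, which is what licenses the basis substitution as a legitimate unitary, and (ii) the claim that restricting the domain of integration to $\mc{C}_{\alpha}$ really implements $\obs{P}_{\alpha}$. The latter is exactly the compatibility hypothesis --- equivalently, that $\ket{\phi}$ for $\phi\in\mc{C}_{\alpha}$ is orthogonal to $\ket{\phi'}$ for $\phi'\notin\mc{C}_{\alpha}$ --- and it is precisely what makes both \eqref{eq:proj_ontic_decomp} and the domain-restriction of \eqref{eq:basis} valid.
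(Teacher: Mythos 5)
Your proposal is correct and follows essentially the same route as the paper's own proof: absorb the phase $e^{i\theta(\phi)}$ into the basis so the coefficient becomes $r(\phi)=\abs{\braket{\phi}{\psi}}$, define $d\wt{\mu}=r\,d\mu$, identify $\obs{P}_{\alpha}\ket{\psi}=\int_{\mc{C}_{\alpha}}\ket{\phi}\,d\wt{\mu}(\phi)$, and obtain \eqref{eq:born_rule_continuous} from $\abs{\obs{P}_{\alpha}\ket{\psi}}^2=\bra{\psi}\obs{P}_{\alpha}\ket{\psi}$ --- with your ``alternative'' double-integral computation via the reproducing identity \eqref{eq:basis} being exactly the paper's own explicit double-check \eqref{eq:check-normalized}. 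The only cosmetic difference is that the paper performs the phase-gauging step as a ``without loss of generality'' just before stating the Proposition, whereas you carry it out inside the proof and flag the measurability and rigged-basis caveats explicitly.
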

\begin{proof}
Let $r(\phi):=\abs{\braket{\phi}{\psi}}$. Then, $r\in L^2(\mc{C},\mu,\R)$ is a real non-negative square-integrable function, and
\begin{equation}
\label{eq:psi_real}
\ket{\psi}=\int_{\mc{C}}r(\phi)\ket{\phi} d\mu(\phi).
\end{equation}

The following measure satisfies eq. \eqref{eq:psi_uniform},
\begin{equation}
\label{eq:nonuniform-measure}
d\wt{\mu}(\phi):=r(\phi)d\mu(\phi).
\end{equation}

Then, evidently $\obs{P}_{\alpha}\ket{\psi}=\int_{\mc{C}_{\alpha}}\ket{\phi} d\wt{\mu}(\phi)$, and

\begin{equation}
\label{eq:normalized_proof}
\left|\int_{\mc{C}_{\alpha}}\ket{\phi} d\wt{\mu}(\phi)\right|^2=\bra{\psi}\obs{P}_{\alpha}\ket{\psi}=\int_{\mc{C}_{\alpha}}r^2(\phi)d\mu(\phi). 
\end{equation}

It may seem that the vector $\int_{\mc{C}_{\alpha}}\ket{\phi} d\wt{\mu}(\phi)$ cannot have finite norm, or that its norm may be larger than $1$.
So let us check this more explicitly:
\begin{equation}
\label{eq:check-normalized}
\begin{aligned}
\left|\int_{\mc{C}_{\alpha}}\ket{\phi} d\wt{\mu}(\phi)\right|^2
&=\(\int_{\mc{C}_{\alpha}}\bra{\phi} d\wt{\mu}(\phi)\)\(\int_{\mc{C}_{\alpha}}\ket{\phi'} d\wt{\mu}(\phi')\)\\
&=\int_{\mc{C}_{\alpha}}\(\int_{\mc{C}_{\alpha}}\braket{\phi}{\phi'} d\wt{\mu}(\phi')\)d\wt{\mu}(\phi)\\
&=\int_{\mc{C}_{\alpha}}\(\int_{\mc{C}_{\alpha}}\braket{\phi}{\phi'} r(\phi')d\mu(\phi')\)d\wt{\mu}(\phi)\\
&=\int_{\mc{C}_{\alpha}}r(\phi)d\wt{\mu}(\phi)
=\int_{\mc{C}_{\alpha}}r^2(\phi)d\mu(\phi).\\
\end{aligned}
\end{equation}

This double-checks eq. \eqref{eq:born_rule_continuous}.
\end{proof}

Figure \ref{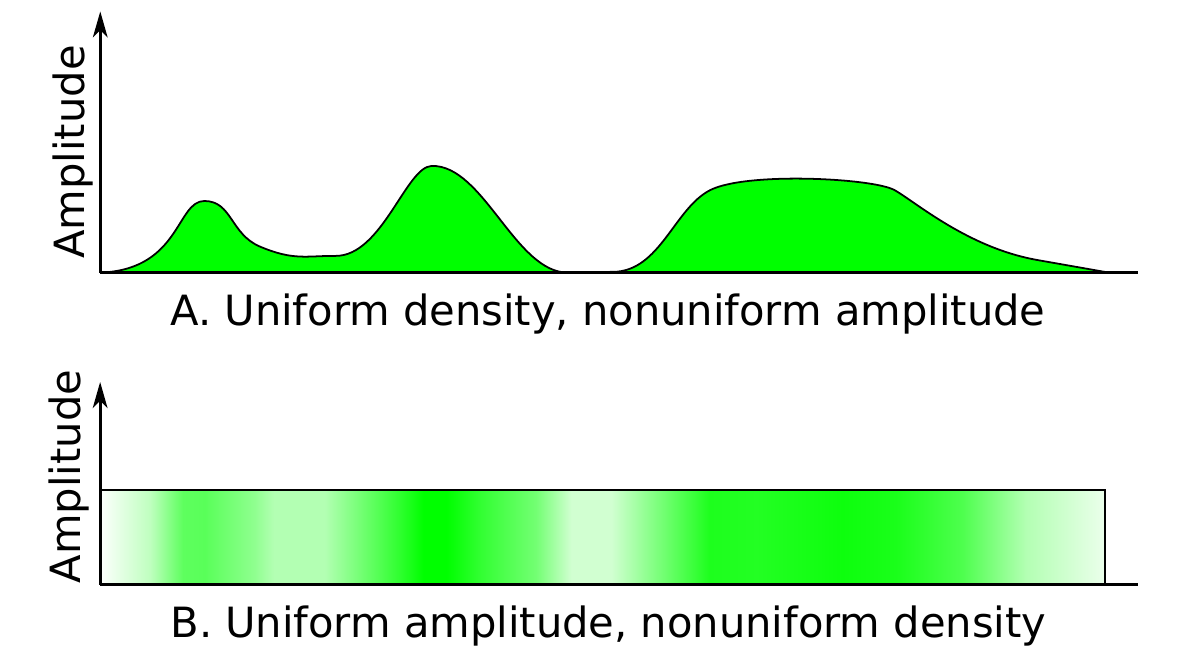} illustrates the uniform-amplitude representation of the wavefunction from eq. \eqref{eq:psi_uniform}.

\image{born.pdf}{1}{\textbf{The Born rule from ``counting'' basis states.}
\\\textbf{A.} The usual interpretation of a wavefunction as a linear combination of basis state vectors of different amplitudes.
\\\textbf{B.} The interpretation of the wavefunction in terms of basis vectors representing ontic states.}

\begin{observation}
\label{obs:true-density}
We notice the existence of three densities. The first one is $d\mu$, given by the measure on $\mc{C}$, and it is independent of states.
The second density is $d\wt{\mu}=r(\phi)d\mu$, which describes how the ontic states contribute to the state vector $\ket{\psi}$ in eq. \eqref{eq:psi_uniform}.
The third density, $d\mu'(\phi)=r^2(\phi)d\mu$ is the probability density corresponding to the Born rule, as in eq. \eqref{eq:born_rule_continuous}.
\end{observation}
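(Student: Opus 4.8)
The plan is to reduce the statement to the real, non-negative representation of the wavefunction established immediately before it, and then to exhibit $\wt\mu$ explicitly. First I would set $r(\phi):=\abs{\braket{\phi}{\psi}}$ and use the gauge substitution $\ket{\phi}\mapsto e^{i\theta(\phi)}\ket{\phi}$, with $\theta(\phi)$ the polar phase of $\braket{\phi}{\psi}$, so that in the new basis the coefficients are $\psi(\phi)=r(\phi)\ge 0$. Then $\ket{\psi}=\int_{\mc{C}}r(\phi)\ket{\phi}\,d\mu(\phi)$, and defining $d\wt\mu(\phi):=r(\phi)\,d\mu(\phi)$ gives the form \eqref{eq:psi_uniform}. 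Because $\ket{\psi}$ is a unit vector, $r\in L^2(\mc{C},\mu,\R)$, which I would invoke to guarantee that $\wt\mu$ is a genuine finite measure on $\mc{C}$.

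For \eqref{eq:born_rule_continuous} I would argue in two independent ways and check that they agree. On the one hand, since $\obs{P}_{\alpha}$ is diagonal it acts by restricting the integration domain to the $\mu$-measurable set $\mc{C}_{\alpha}$, so $\obs{P}_{\alpha}\ket{\psi}=\int_{\mc{C}_{\alpha}}\ket{\phi}\,d\wt\mu(\phi)$, and its squared norm is just $\bra{\psi}\obs{P}_{\alpha}\ket{\psi}=\int_{\mc{C}_{\alpha}}\abs{\psi(\phi)}^2\,d\mu(\phi)=\int_{\mc{C}_{\alpha}}r^2(\phi)\,d\mu(\phi)$. On the other hand I would compute the squared norm directly from the uniform-amplitude integral, expanding it as the double integral $\int_{\mc{C}_{\alpha}}\!\int_{\mc{C}_{\alpha}}\braket{\phi}{\phi'}\,d\wt\mu(\phi')\,d\wt\mu(\phi)$, rewriting $d\wt\mu(\phi')=r(\phi')\,d\mu(\phi')$, and collapsing the inner integral with the reproducing property \eqref{eq:basis}; since the wavefunction of $\obs{P}_{\alpha}\ket{\psi}$ is the restriction of $r$ to $\mc{C}_{\alpha}$ (extended by zero), the inner integral returns $r(\phi)$ for $\phi\in\mc{C}_{\alpha}$, leaving $\int_{\mc{C}_{\alpha}}r(\phi)\,d\wt\mu(\phi)=\int_{\mc{C}_{\alpha}}r^2(\phi)\,d\mu(\phi)$, matching the first computation.

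The step I expect to be the main obstacle is the legitimacy of this double-integral manipulation, since the $\ket{\phi}$ are continuum-normalized and are not individually elements of $\hilbert$, so a priori it is unclear that $\int_{\mc{C}_{\alpha}}\ket{\phi}\,d\wt\mu(\phi)$ has finite norm at all. I would handle this by treating the reproducing identity \eqref{eq:basis} as defining the action of the distributional kernel $\braket{\phi}{\phi'}$, and extending it from the compactly-supported continuous functions for which it is stated to the $L^2$ function given by $r$ on $\mc{C}_{\alpha}$ and $0$ elsewhere, by a density and approximation argument. The agreement of the two norm computations is then exactly the consistency check that rules out any divergence or overcounting, confirming that the vector lies in $\hilbert$ with the asserted Born-rule norm.
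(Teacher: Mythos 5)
Your proposal is correct and follows essentially the same route as the paper: the paper justifies this observation through Proposition \ref{thm:born_rule_counting_measure}, whose proof likewise absorbs the phase into the basis, sets $d\wt{\mu}(\phi)=r(\phi)\,d\mu(\phi)$, and verifies eq.~\eqref{eq:born_rule_continuous} both directly via $\bra{\psi}\obs{P}_{\alpha}\ket{\psi}$ and by the double-integral computation \eqref{eq:check-normalized} collapsing the kernel $\braket{\phi}{\phi'}$ against $r(\phi')\,d\mu(\phi')$. Your concern about finiteness of the norm is the same issue the paper addresses in Remark \ref{remark:uniform} (square-integrability of $r$ and absolute continuity of $\wt{\mu}$ with respect to $\mu$), so your density-and-approximation handling is a reasonable stand-in for the paper's justification.
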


This may seem strange, despite the explicit calculation from \eqref{eq:check-normalized}, so let us try to understand the interplay between these densities.

\begin{remark}[``Magic'' accident]
\label{remark:squared_amplitude}
One may expect that we have to define the measure $\wt{\mu}$ so that $d\wt{\mu}(\phi)$ is $r^2(\phi)d\mu(\phi)$, rather than as in eq. \eqref{eq:nonuniform-measure}.
But, interestingly, eq. \eqref{eq:born_rule_continuous} follows without this, simply by choosing the measure $\wt{\mu}$ so that the amplitudes become uniformly equal to $1$.
Moreover, it does not even work otherwise, because $\ket{\psi}\neq\int_{\mc{C}}r^2(\phi)\ket{\phi} d\mu(\phi)$.
\qed
\end{remark}

\begin{remark}[Why does it work?]
\label{remark:uniform}
Naively, it may seem that the norm of $\int_{\mc{C}_{\alpha}}\ket{\phi} d\wt{\mu}(\phi)$ cannot be finite, or at least that it is equal to $\int_{\mc{C}_{\alpha}}d\wt{\mu}(\phi)$ and it can be larger than $1$, but this is incorrect.
Eq. \eqref{eq:born_rule_continuous} is correct, as checked in \eqref{eq:normalized_proof} and double-checked in \eqref{eq:check-normalized}, because $r(\phi)$ is square-integrable, and since it is $\mu$-measurable, the measure $\wt{\mu}$ is \emph{absolutely continuous} with respect to $\mu$.

There is a reason why, in eq. \eqref{eq:check-normalized},
\begin{equation}
\label{eq:int_r_phi}
\int_{\mc{C}_{\alpha}}\braket{\phi}{\phi'} d\wt{\mu}(\phi')=r(\phi)
\end{equation}
rather than $1$.
A perhaps more revealing way of understanding this involves the \emph{scaling property} of the Dirac distribution $\delta(x)$ with $a>1$,
\begin{equation}
\label{eq:dirac_scaling}
\delta(a x)=a^{-1}\delta(x).
\end{equation} 

To see how this works, consider the Hilbert space  $L^2(\R^\n,\mu,\C)$ with the basis $\(\ket{\x}\)_{\x\in\R^\n}$.
If $\mathbf{f}:\R^\n\to\R^\n$ is an invertible reparametrization of $\R^\n$, by making a change of variables $\wt{\y}=\mathbf{f}(\y)$ we obtain the following generalization of eq. \eqref{eq:dirac_scaling},
\begin{equation}
\label{eq:dirac_scaling_generalized}
\int_{\R^\n}\braket{\x}{\y}\de \wt{\y}
= \int_{\R^\n}\braket{\x}{\y}\abs{\frac{\partial\mathbf{f}}{\partial\y}}\de\y
=\abs{\frac{\partial\mathbf{f}}{\partial\x}},
\end{equation}
where $\abs{\partial\mathbf{f}/\partial\x}$ is the modulus of the determinant of the Jacobian matrix of $\mathbf{f}$ at $\x$.
With the notation from Proposition \ref{thm:born_rule_counting_measure} but $\phi$ replaced by $\x$, $\mu$ is the Lebesgue measure on $\R^\n$, $d\mu(\x)=\de\x$, $d\wt{\mu}(\y)=\de\wt{\y}$, and
\begin{equation}
\label{eq:dirac_scaling_generalized_r}
r(\x)=\frac{d\wt{\mu}(\x)}{d\mu(\x)}=\abs{\frac{\partial\mathbf{f}}{\partial\x}}.
\end{equation}

This explains once more how the homogenization of the amplitude from eq. \eqref{eq:psi_uniform}, despite not involving $r^2(\phi)$, leads to its appearance in eq. \eqref{eq:check-normalized}, by using the generalized scaling property of the Dirac delta distribution.
\qed
\end{remark}

\begin{observation}
\label{obs:interpretation-r}
This explains that, while the probability density of the ontic states $\phi$ is $r^2(\phi)d\mu$, the state vector $\ket{\psi}$ taking part in the {\schrod} equation is composed by using the density $d\wt{\mu}$, as in eq. \eqref{eq:psi_uniform}.
In fact, this is the continuous limit of eq. \eqref{eq:finite_case_psi_partition_vectors}.
\end{observation}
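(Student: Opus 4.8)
The plan is to reduce both claims to the ordinary inner product on $L^2(\mc{C},\mu,\C)$, so that no genuinely distributional step is needed. First I would expand $\ket{\psi}$ in the continuous basis, writing $\psi(\phi)=\braket{\phi}{\psi}$ and using completeness of $(\ket{\phi})_{\phi\in\mc{C}}$ (equivalently, the reproducing property eq.~\eqref{eq:basis}) to get $\ket{\psi}=\int_{\mc{C}}\psi(\phi)\ket{\phi}\,d\mu(\phi)$. Since the phase of $\braket{\phi}{\psi}$ has been absorbed into the basis just before the statement, the function $\psi(\phi)=r(\phi):=\abs{\braket{\phi}{\psi}}\ge 0$ is real, non-negative, $\mu$-measurable, and lies in $L^2(\mc{C},\mu,\R)$, so that $\ket{\psi}=\int_{\mc{C}}r(\phi)\ket{\phi}\,d\mu(\phi)$.

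Next I would introduce the measure $\wt{\mu}:=r\,\mu$, \ie $d\wt{\mu}(\phi)=r(\phi)\,d\mu(\phi)$. Because $r\ge 0$ is measurable, $\wt{\mu}$ is a bona fide non-negative measure, absolutely continuous with respect to $\mu$ with Radon--Nikodym density $r$. Substituting it into the last display yields the representation eq.~\eqref{eq:psi_uniform} at once; and for any projector $\obs{P}_{\alpha}$ diagonal in the basis, with corresponding $\mu$-measurable set $\mc{C}_{\alpha}$, the same substitution gives $\obs{P}_{\alpha}\ket{\psi}=\int_{\mc{C}_{\alpha}}\psi(\phi)\ket{\phi}\,d\mu(\phi)=\int_{\mc{C}_{\alpha}}\ket{\phi}\,d\wt{\mu}(\phi)$.

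To obtain the norm identity eq.~\eqref{eq:born_rule_continuous} I would \emph{not} evaluate $\abs{\int_{\mc{C}_{\alpha}}\ket{\phi}\,d\wt{\mu}(\phi)}^{2}$ directly as a double integral against the singular kernel $\braket{\phi}{\phi'}$, but would instead use that $\obs{P}_{\alpha}$ is an orthogonal projector, so $\abs{\obs{P}_{\alpha}\ket{\psi}}^{2}=\bra{\psi}\obs{P}_{\alpha}^{\dagger}\obs{P}_{\alpha}\ket{\psi}=\bra{\psi}\obs{P}_{\alpha}\ket{\psi}$. The right-hand side is an honest $L^2$ inner product, $\bra{\psi}\obs{P}_{\alpha}\ket{\psi}=\int_{\mc{C}_{\alpha}}\overline{\psi(\phi)}\,\psi(\phi)\,d\mu(\phi)=\int_{\mc{C}_{\alpha}}r^2(\phi)\,d\mu(\phi)$, which is exactly eq.~\eqref{eq:born_rule_continuous}. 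As a cross-check one can run the double integral anyway: the inner integral $\int_{\mc{C}_{\alpha}}\braket{\phi}{\phi'}\,d\wt{\mu}(\phi')=\int_{\mc{C}_{\alpha}}\braket{\phi}{\phi'}\,r(\phi')\,d\mu(\phi')$ collapses to $r(\phi)$ by eq.~\eqref{eq:basis}, and the remaining $\int_{\mc{C}_{\alpha}}r(\phi)\,d\wt{\mu}(\phi)=\int_{\mc{C}_{\alpha}}r^2(\phi)\,d\mu(\phi)$ returns the same value.

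The hard part is rigor rather than computation: the $\ket{\phi}$ are improper, non-normalizable generalized eigenvectors, so $\braket{\phi}{\phi'}$ is a $\mu$-reproducing kernel (a distribution) and not a number, and one must rule out the tempting but false estimate $\abs{\int_{\mc{C}_{\alpha}}\ket{\phi}\,d\wt{\mu}(\phi)}^{2}=\wt{\mu}(\mc{C}_{\alpha})=\int_{\mc{C}_{\alpha}}r\,d\mu$. The $L^2$ route above sidesteps the distributional issue completely, while the double integral makes transparent \emph{why} the naive guess is wrong: the kernel contributes an extra factor of the density $r$, so that $\int_{\mc{C}_{\alpha}}r\,d\wt{\mu}$ becomes $\int_{\mc{C}_{\alpha}}r^2\,d\mu$ rather than $\int_{\mc{C}_{\alpha}}r\,d\mu$ --- the same mechanism as the Dirac scaling property, which a fully rigorous treatment would phrase in a Gelfand-triple (rigged Hilbert space) framework. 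Finiteness then needs no separate argument, since $r\in L^2$ already guarantees $\obs{P}_{\alpha}\ket{\psi}\in\hilbert$ with $\abs{\obs{P}_{\alpha}\ket{\psi}}^{2}\le\abs{\ket{\psi}}^{2}$ (equal to $1$ for the unit state vector), even when the total mass $\wt{\mu}(\mc{C})=\int_{\mc{C}}r\,d\mu$ is infinite.
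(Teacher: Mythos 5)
Your treatment of the first claim is correct and is essentially the paper's own argument: Proposition \ref{thm:born_rule_counting_measure} obtains \eqref{eq:psi_uniform} by absorbing the phase and setting $d\wt{\mu}(\phi)=r(\phi)\,d\mu(\phi)$, and its eq. \eqref{eq:normalized_proof} derives \eqref{eq:born_rule_continuous} through exactly your projector identity $\left|\obs{P}_{\alpha}\ket{\psi}\right|^2=\bra{\psi}\obs{P}_{\alpha}\ket{\psi}$, with \eqref{eq:check-normalized} as the same double-integral cross-check you run; your explanation of why the naive value $\wt{\mu}(\mc{C}_{\alpha})$ is wrong (the reproducing kernel \eqref{eq:basis} contributes an extra factor of the density $r$, i.e.\ the Dirac scaling mechanism) is precisely the content of Remark \ref{remark:uniform}. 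So on the interplay of the three densities --- $d\mu$, $d\wt{\mu}=r\,d\mu$ as the composition density of the state vector, and $r^2\,d\mu$ as the probability density --- your route and the paper's coincide.

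There is, however, a genuine gap: the observation has a second sentence, ``In fact, this is the continuous limit of eq.~\eqref{eq:finite_case_psi_partition_vectors}'', which your proposal never addresses, and which is the point of the observation --- it ties the uniform-density representation back to the finite equal-weight decomposition used in the proof of Theorem \ref{thm:born_rule_counting}. To close it, recall that the cells $D_{n,k}$ of the partition $\mc{D}_n$ have equal $\mu'$-measure $2^{-n}$ with $\mu'=r^2\mu$, and that (reading \eqref{eq:finite_case_psi_partition} with the integrand $\psi(\phi)\ket{\phi}$) each unit vector $\ket{n,k}$ is $\sqrt{2^n}$ times the restriction of the basis integral to its cell. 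With the phase absorbed, $\psi=r$, so
\begin{equation*}
\frac{1}{\sqrt{2^n}}\ket{n,k}=\int_{D_{n,k}}r(\phi)\ket{\phi}\,d\mu(\phi)=\int_{D_{n,k}}\ket{\phi}\,d\wt{\mu}(\phi),
\end{equation*}
and summing over $k$ shows that for every $n$ the equal-coefficient sum \eqref{eq:finite_case_psi_partition_vectors} is exactly the decomposition of the integral \eqref{eq:psi_uniform} over the cells of $\mc{D}_n$. Refining the partition indefinitely ($n\to\infty$, each unit vector $\ket{n,k}$ carrying the uniform weight $2^{-n/2}$ on an ever finer cell) is what the observation means by the continuous limit. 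Without this identification your argument proves Proposition \ref{thm:born_rule_counting_measure}, but not the sentence connecting the density $d\wt{\mu}$ to the finite state-counting picture of Theorem \ref{thm:born_rule_counting}.
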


\subsection{Wavefunction or wavefunctional?}
\label{s:wavefunctional}

Subsystems admit observables that cannot be diagonalized simultaneously, so their continuous bases depend on the observable. 
But since different measurement settings ultimately translate into distinguishing macrostates defined by the same set of macro projectors, the ontic basis from Observation \ref{obs:ontic-basis} and Principle \ref{pp:correspondence} is consistent with any observables we measure for the subsystems \cite{Stoica2022ActualQuantumObservablesAreCompatible}.
This universal basis can be taken as representing ``classical states'', which may be called \emph{ontic states}. Theorem \ref{thm:born_rule_counting} allows us to interpret the Born rule for any measurement as ``counting'' such ontic states.

But what are these ontic states?
Since each particle is represented on a Hilbert space of wavefunctions that have, among their degrees of freedom, the positions, which play a role in any measurement, and also form a continuous basis, it may be tempting to interpret the ontic states as position eigenstates, as in Example \ref{example:spin}. But we know that in fact the universe is not described by nonrelativistic quantum mechanics, but by quantum field theory, in which there are no localized particles.

A unique basis $(\ket{\phi})_{\phi\in\mc{C}}$ that really is ontic or classical is possible in quantum field theory.
In the {\schrod} wavefunctional formulation of quantum field theory \cite{Jackiw1988AnalysisInfDimManifoldsSchrodingerRepresentationForQuantizedFields,Hatfield2018QuantumFieldTheoryOfPointParticlesAndStrings}, $\mc{C}$ becomes the configuration space of classical fields, and the {\schrod} \emph{wavefunctional}
\begin{equation}
\label{eq:wavefunctional}
\Psi[\phi]:=\braket{\phi}{\Psi}
\end{equation}
replaces the nonrelativistic wavefunction.
Here, $\phi$ stands for a collection of classical fields, $\phi=(\phi_1,\ldots,\phi_n)$.
The configuration space $\mc{C}$ is endowed with a measure $\mu$\footnote{To admit a Lebesgue measure $\mu$, the classical configuration space $\mc{C}$ should be finite-dimensional.
Maybe it is, because the \emph{entropy bound} \cite{Bekenstein1981UniversalUpperBound,Bekenstein2005HowDoesEntropyInformationBoundWork} requires the Hilbert space of fields defined on compact regions of space to have finite dimension. Also the fields are constrained by equations, the gauge degrees of freedom have to be factored out, and there are severe constraints related to the arrow of time \cite{Stoica2024DoesQuantumMechanicsRequireConspiracy}. So we assume that $\mc{C}$ admits a measure $\mu$.}.

\subsection{Macro-classicality}
\label{s:macro-classicality}

The wavefunctional formulation represents quantum states in terms of classical field states, in the sense that the wavefunctional is a complex functional defined on the configuration space of classical fields. The usual Fock representation can be obtained from the basis $(\ket{\phi})_{\phi\in\mc{C}}$ \cite{Hatfield2018QuantumFieldTheoryOfPointParticlesAndStrings}. The Fock representation can then be used to interpret the quantum fields in terms of more commonly used nonrelativistic quantum mechanical wavefunctions and operators. But this is a departure from the more foundational description provided by wavefunctionals.

We never observe individual particles directly, but only macrostates. Macrostates are imported from the classical theory, and they are empirically adequate, because at the macro level the universe looks classical. 
Therefore, Principle \ref{pp:correspondence}, which says that states of the form $\ket{\phi}$ belong to macrostates, \ie for every $\ket{\phi}$ there is a macrostate $\obs{P}_{\alpha}\hilbert$ so that $\ket{\phi}\in\obs{P}_{\alpha}\hilbert$, makes sense.
\begin{principle}
\label{pp:micro-macro}
At any instant, at the macro level, a classical universe in the classical state $\phi$ looks the same as a quantum universe in the quantum state $\ket{\phi}$ or linear combinations of such states from the same macrostate.
\end{principle}

And indeed, it took us a very long time to realize that our universe is not classical, but quantum.

\subsection{Interpretation of complex numbers}
\label{s:complex}

Recall that eq. \eqref{eq:psi_uniform} is based on absorbing the phase factor in the vector by substituting $\ket{\phi}\mapsto e^{i \theta[\phi]}\ket{\phi}$,  done just before stating Proposition \ref{thm:born_rule_counting_measure}.
This substitution depends on the state $\ket{\Psi}$, in particular $\theta[\phi]$ changes in time.
So we cannot simply interpret $\ket{\Psi}$ directly as a probability density over the classical states.

But the phase change $\ket{\phi}\mapsto e^{i \theta[\phi]}\ket{\phi}$ can be identified with an $\U(1)$ gauge transformation of the classical field, denoted $\phi\mapsto e^{i \theta[\phi]}\phi$ (in fact $\U(1)$ acts differently on different fields, but I will use a uniform notation for its action), so that
\begin{equation}
\label{eq:phase-gauge}
e^{i \theta[\phi]}\ket{\phi}\equiv\ket{e^{i \theta[\phi]}\phi}.
\end{equation}

This makes sense because
(1) multiplying a state vector with a phase factor changes the vector, but not the physical (quantum) state it represents,
and
(2) an $\tn{U}(1)$ gauge transformation of a classical field represents the same physical (classical) state.

Charged and spinor fields, and electromagnetic potentials, admit a nontrivial $\U(1)$ symmetry, but it is sufficient that $\phi$ includes one such field.
The gauge transformation depends on the state $\ket{\Psi}$, so it changes in time.

\begin{observation}
\label{obs:gauge}
$\Psi[\phi]$ \emph{can be made real by changing the global $\U(1)$ gauge of the basis of classical fields}.
\end{observation}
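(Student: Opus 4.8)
The plan is to reduce the statement to the pointwise polar decomposition of the complex numbers $\Psi[\phi]$, combined with the gauge identification already recorded in eq.~\eqref{eq:phase-gauge}. Concretely, for each $\phi\in\mc{C}$ I would write the wavefunctional in polar form, $\Psi[\phi]=\braket{\phi}{\Psi}=\abs{\Psi[\phi]}e^{i\theta[\phi]}$, fixing $\theta[\phi]:=0$ on the set where $\Psi[\phi]=0$ (there the phase is undetermined) so that $\theta:\mc{C}\to\R$ is a well-defined measurable function, its measurability being inherited from that of $\Psi$. This is exactly the function $\theta$ invoked just before Proposition~\ref{thm:born_rule_counting_measure}, now read in the wavefunctional setting rather than for a single-particle $\ket{\psi}$.

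Next I would perform the basis substitution $\ket{\phi}\mapsto e^{i\theta[\phi]}\ket{\phi}$ and, via eq.~\eqref{eq:phase-gauge}, reinterpret it as the relabelling $\ket{\phi}\mapsto\ket{e^{i\theta[\phi]}\phi}$, i.e.\ as a reparametrization of the basis by the $\U(1)$-gauge-transformed classical configurations. I would then verify that this leaves the family an orthogonal basis: multiplying each $\ket{\phi}$ by a unit phase cannot spoil the defining relation \eqref{eq:basis}, since the two conjugate phases cancel on the diagonal. Evaluating the coefficient in the new basis gives
\[
\braket{e^{i\theta[\phi]}\phi}{\Psi}=e^{-i\theta[\phi]}\braket{\phi}{\Psi}=e^{-i\theta[\phi]}\abs{\Psi[\phi]}e^{i\theta[\phi]}=\abs{\Psi[\phi]},
\]
which is real and non-negative, and coincides with the function $r$ of Proposition~\ref{thm:born_rule_counting_measure}. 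This already establishes the claim for any fixed $\ket{\Psi}$.

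The main obstacle is not the algebra above, which is immediate, but justifying that the phase-absorbing substitution is a genuine change of $\U(1)$ gauge rather than a purely formal rephasing of basis vectors. This rests on eq.~\eqref{eq:phase-gauge}, and hence on two physical provisos: that the collection $\phi$ contains at least one field carrying nontrivial $\U(1)$ charge (a charged or spinor field, or an electromagnetic potential), so that $\phi\mapsto e^{i\theta[\phi]}\phi$ is a nontrivial action; and that this action is a symmetry of the configuration space $\mc{C}$ preserving the measure $\mu$, so that the rephased family remains an orthogonal basis of $\hilbert=L^2(\mc{C},\mu,\C)$. I would also flag a point of terminology: since $\theta[\phi]$ depends on the configuration $\phi$ (and, through $\ket{\Psi}$, on time), \emph{global} here means that for each fixed $\phi$ the transformation acts uniformly over physical space, while the group element itself varies across $\mc{C}$. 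With these provisos in place, the construction delivers a basis of ontic classical-field states in which $\Psi[\phi]=\abs{\Psi[\phi]}\ge 0$ is real, as claimed.
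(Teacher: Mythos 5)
Your proposal is correct and follows essentially the same route as the paper: absorb the phase $\theta[\phi]$ from the polar decomposition of $\Psi[\phi]$ into the basis vectors, and reinterpret that rephasing as the $\U(1)$ gauge transformation $\phi\mapsto e^{i\theta[\phi]}\phi$ via eq.~\eqref{eq:phase-gauge}, with the proviso that $\phi$ includes at least one field carrying nontrivial $\U(1)$ charge. The extra steps you supply (handling the zero set of $\Psi$, checking that rephasing preserves the orthogonality relation \eqref{eq:basis}, and the clarification of ``global'') are routine verifications the paper leaves implicit, not a different argument.
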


\begin{principle}
\label{pp:wavefunctional}
The wavefunctional $\ket{\Psi}=\int_{\mc{C}}\ket{\phi} d\wt{\mu}[\phi]$ can be interpreted as a set of gauged classical fields distributed according to a density functional (Fig. \ref{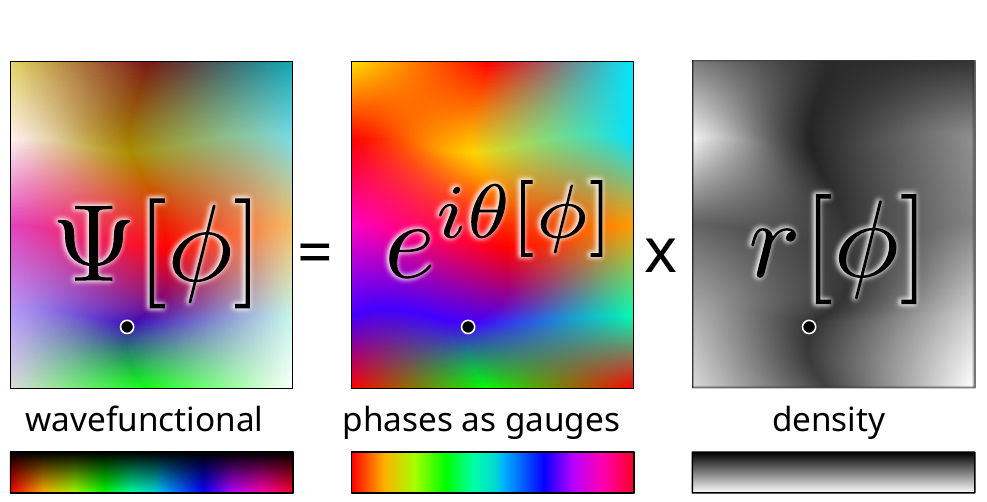}).
\end{principle}

\image{psi-interpretation.pdf}{1}{\textbf{Interpretation of the wavefunctional.} The $\U(1)$ gauge or phase is represented by the pure color hues in the color wheel. The density $r[\phi]$, represented as shades of gray, is the density of state vectors as they combine in $\ket{\Psi}$ (while the probability density over the sample space $\mc{C}$ is $r^2[\phi]$). Their combination gives the wavefunctional $\ket{\Psi}=\int_{\mc{C}}\ket{\phi} d\wt{\mu}[\phi]$ as a set of classical fields with varying densities and gauges.}

In this way, the probability density $r^2[\phi]$ is similar to the probability density in a classical system.
This relation becomes visible for a quantum system obtained as the Koopman-von Neumann representation of a classical system \cite{Koopman1931HamiltonianSystemsAndTransformationInHilbertSpace,vonNeumann1932KoopmanMethod}. In this case, indeed the classical probability density becomes $r^2[\phi]$.

\subsection{Local beables}
\label{s:local-beables}

There are several benefits in using the interpretation of the wavefunctional from Principle \ref{pp:wavefunctional} as starting point in the investigations of the foundations of quantum theory.
It is more foundational, since quantum field theory is more foundational than nonrelativistic quantum mechanics.
It comes with an ontology -- each state $\ket{\phi}$ corresponds to a set of fields defined on the $3$d-space, not on the configuration space. These fields are the \emph{local beables}. The necessity of local beables was extensively advocated by Bell \cite{Bell2004SpeakableUnspeakable}.
The Born rule can be interpreted in terms of such ontic states, based on Principle \ref{pp:correspondence}.

A state does not consist of a single ontic state, but of a set of such states (Principle \ref{pp:wavefunctional}).
The Projection Postulate should not be understood as collapsing the system to a basis state $\ket{\phi}$, no measurement can extract the complete information about the state of the entire universe.
Only the ontic states making $\Psi[\phi]$ belonging to the resulting macrostate $\obs{P}_{\alpha}\hilbert$ should remain after the projection.

Based on the above discussion, let us try to build a version of standard quantum mechanics based on Principles \ref{pp:correspondence}, \ref{pp:micro-macro}, \ref{pp:wavefunctional}, and the Projection Postulate. Before the projection occurs, the state is a superposition of ontic states. But even after the projection occurs, the state is still a superposition of ontic states, because the macro-projectors are always compatible with more ontic states.
If only one would remain, a measurement of an observable $\wh{A}$ followed by a measurement of another observable $\wh{B}$ that commutes with $\wh{A}$, followed by a repetition of the measurement of $\wh{A}$, would give the wrong probabilities, because the measurement of $\wh{B}$ would change the value of the observable $\wh{A}$.
It follows that
\begin{observation}
\label{obs:self-location}
Even in a version of standard quantum mechanics based on Principles \ref{pp:correspondence}, \ref{pp:micro-macro}, \ref{pp:wavefunctional}, and the Projection Postulate, the probability density $r^2[\phi]$ is more naturally understood as self-location probability, rather than epistemic probability (see Observation \ref{obs:equivalence}).
\end{observation}

Therefore, we are led naturally to a version of the many-worlds interpretation based on Principles \ref{pp:correspondence}, \ref{pp:micro-macro}, and \ref{pp:wavefunctional}.

\subsection{Many-worlds}
\label{s:MWI}

If decoherence makes the components of $\Psi[\phi]$ corresponding to different macrostates no longer interfere, there is no need to invoke the Projection Postulate, and we can adopt the \emph{many-worlds interpretation} (MWI), based on Principles \ref{pp:correspondence}, \ref{pp:micro-macro}, and \ref{pp:wavefunctional}.

Observation \ref{obs:self-location} shows that, once we adopt Principles \ref{pp:correspondence}, \ref{pp:micro-macro}, and \ref{pp:wavefunctional}, we should already assume many-worlds even in standard quantum mechanics, particularly if we want it to satisfy Condition \ref{cond:probability} and to be able to support the derivation of the Born rule from Theorem \ref{thm:born_rule_counting} interpreted according to Principle \ref{pp:correspondence}.

\begin{observation}
\label{obs:MWI-counting}
``Counting'' micro-branches that correspond to the basis $(\ket{\phi})_{\phi\in\mc{C}}$ gives the correct probabilities in the MWI, in agreement with Condition \ref{cond:probability}.
Even if, unlike the macro-branches, the micro-branches may interfere in the future, they interfere within the same macro-branch.
Moreover, since each micro-branch consists of classical fields $\phi$, and since these are the local beables, it becomes justified to count each micro-branch as a world, as stated by Principle \ref{pp:correspondence}.
\end{observation}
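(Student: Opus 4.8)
The plan is to realize the ``uniform sum'' decomposition of eq.~\eqref{eq:finite_case_psi} in the continuous setting and then read the Born weight off as a genuine ratio of counts. First I would pass from the state to its induced probability measure: setting $\psi(\phi)=\braket{\phi}{\psi}$ and $\varrho(\phi)=\abs{\psi(\phi)}^2$, the normalization $\braket{\psi}{\psi}=1$ makes $\mu':=\varrho\,\mu$ a probability measure on the support $D$ of $\psi$. The point of working with $\mu'$ rather than $\mu$ is that equal $\mu'$-pieces are exactly what produce \emph{unit} vectors after rescaling: the $\ket{n,k}$ of eq.~\eqref{eq:finite_case_psi_partition} are mutually orthogonal because their supports $D_{n,k}$ are disjoint and the $\ket{\phi}$ are orthogonal, and the prefactor $\sqrt{2^n}$ is tuned so that $\ket{\psi}=2^{-n/2}\sum_k\ket{n,k}$ reproduces eq.~\eqref{eq:finite_case_psi} exactly, now with $2^n$ equal-weight terms and no ceiling imposed by a finite dimension.

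Next I would express both sides of the target identity as $\mu'$-measures. By orthonormality of the $\ket{n,k}$, the squared norm of the ``favorable'' partial sum is a pure count ratio,
\begin{equation*}
\left|\,2^{-n/2}\!\!\sum_{k\in M_{n,\alpha}}\!\!\ket{n,k}\,\right|^2=\frac{\abs{M_{n,\alpha}}}{2^n}=\sum_{k\in M_{n,\alpha}}\mu'(D_{n,k}),
\end{equation*}
which is manifestly ``favorable over total'' as demanded by Condition~\ref{cond:probability}. On the other hand $\obs{P}_{\alpha}\ket{\psi}=\int_{\mc{C}_{\alpha}}\psi(\phi)\ket{\phi}\,d\mu(\phi)$ has squared norm $\mu'(\mc{C}_{\alpha}\cap D)=\bra{\psi}\obs{P}_{\alpha}\ket{\psi}$, using the reproducing property eq.~\eqref{eq:basis}. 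Thus it suffices to establish eq.~\eqref{eq:partitions-invalid-negligible}, that the count ratio converges to $\mu'(\mc{C}_{\alpha}\cap D)$, together with eq.~\eqref{eq:finite_case_psi_partition_converge}. The latter is then immediate: the difference between $2^{-n/2}\sum_{k\in M_{n,\alpha}}\ket{n,k}$ and $\obs{P}_{\alpha}\ket{\psi}$ is supported on the set where $\bigcup_{k\in M_{n,\alpha}}D_{n,k}$ and $\mc{C}_{\alpha}\cap D$ disagree, whose $\mu'$-measure (which, as above, equals the squared norm of that difference) vanishes by eq.~\eqref{eq:partitions-invalid-negligible}.

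The crux, and the step I expect to be the main obstacle, is constructing the refining, equal-$\mu'$-measure partitions $\mc{D}_n$ that satisfy eq.~\eqref{eq:partitions-invalid-negligible} simultaneously for every $\alpha$. I would build them level by level using that $\mu'$ is non-atomic: being absolutely continuous with respect to the Lebesgue-type measure $\mu$ on the manifold $\mc{C}$, $\mu'$ takes every intermediate value on measurable subsets, so equal-measure blocks of any prescribed size $2^{-n}$ exist. At level $n$ I would cut each of the sets $\mc{C}_{\alpha}\cap D$ into blocks of measure $2^{-n}$; since $\mu'(\mc{C}_{\alpha}\cap D)$ need not be a dyadic multiple of $2^{-n}$, at most one ``straddling'' block per $\alpha$ fails to sit inside a single $\mc{C}_{\alpha}$, so the uncovered boundary defect is at most $\abs{\mc{A}}\,2^{-n}$, which tends to $0$ for finite $\mc{A}$ (and, for countable $\mc{A}$, after discarding a tail of arbitrarily small total $\mu'$-measure). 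Refining compatibly across levels drives this defect to zero, yielding eq.~\eqref{eq:partitions-invalid-negligible}; identifying each $\mc{D}_n$ with a compatible partition of the identity $\mc{P}_n$ that refines $(\obs{P}_{\alpha})_{\alpha\in\mc{A}}$ then records that what we have counted are genuinely the ontic basis vectors consistent with each macro-projector, as required by Principle~\ref{pp::correspondence}.
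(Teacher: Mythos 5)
Your proposal is correct and follows essentially the same route as the paper: the observation rests on Theorem \ref{thm:born_rule_counting}, and you reconstruct its proof with the same device — the probability measure $\mu'=\abs{\psi}^2\mu$, the refining dyadic partitions of the support into equal-$\mu'$-measure blocks, the index sets $M_{n,\alpha}$, and the two convergence statements \eqref{eq:partitions-invalid-negligible} and \eqref{eq:finite_case_psi_partition_converge}, after which counting the $\ket{n,k}$ consistent with each macro-projector yields the Born weight as a favorable-over-total ratio. You actually go a step beyond the paper, which only asserts that partitions satisfying \eqref{eq:partitions-invalid-negligible} ``can be chosen'': your boundary-block construction (at most one straddling block per macro-set, defect at most $\abs{\mc{A}}\,2^{-n}$, with a tail-discarding argument for countable $\mc{A}$) supplies that existence claim explicitly, and your identification of the squared norm of the error vector with the $\mu'$-measure of the symmetric difference is the right way to make \eqref{eq:finite_case_psi_partition_converge} rigorous. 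One scope remark: your argument establishes the first, quantitative sentence of the observation; the second and third sentences (micro-branches interfere only within a macro-branch; each micro-branch counts as a world because $\ket{\phi}$ is a classical field configuration, hence a local beable) are interpretive and are supported in the paper not by the theorem itself but by the compatibility of the macro-projectors with the ontic basis and by Principles \ref{pp::correspondence} and \ref{pp:wavefunctional}, which you correctly invoke only at the end.
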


\begin{observation}
\label{obs:background-freedom}
We should also include quantum gravity in our foundational investigations of quantum theory. In background-free approaches to quantum gravity, it becomes impossible to physically interpret all linear combinations as superpositions, because states in which the geometry of space is different cannot be superposed unambiguously, so the ontic states dissociate automatically \cite{Stoica2022BackgroundFreedomLeadsToManyWorldsLocalBeablesProbabilities}. They can reassociate, unless the dissociation becomes irreversible due to decoherence. This provides an additional justification for the many-worlds interpretation (in the revised form from \cite{Stoica2022BackgroundFreedomLeadsToManyWorldsLocalBeablesProbabilities}).
\end{observation}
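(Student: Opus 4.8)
The plan is to treat Observation \ref{obs:background-freedom} as a statement about the physical inner product on the diffeomorphism-invariant Hilbert space, and to argue that this inner product fails to support a canonical relative phase between ontic states built on inequivalent geometries. Working in the \schrod\ wavefunctional picture, each ontic state $\ket{\phi}$ specifies, among its fields, a spatial three-geometry. Background freedom means that physical states are equivalence classes under the action of the spatial diffeomorphism group, so the genuine Hilbert space is the quotient obtained after imposing the diffeomorphism (and Hamiltonian) constraints. First I would make explicit that, on an ordinary fixed background, the linear combination $a\ket{\phi_1}+b\ket{\phi_2}$ is unambiguous precisely because the two configurations are fields over one and the same manifold of points, which fixes their relative phase.

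The core step is to show that this identification is unavailable when $\phi_1$ and $\phi_2$ carry geometrically inequivalent metrics. To superpose coherently one must specify a point-by-point identification of the two configurations, \ie a diffeomorphism between their underlying manifolds; a relative phase $e^{i\theta}$ would then have to be invariant under the \emph{independent} diffeomorphism gauge groups acting on $\ket{\phi_1}$ and on $\ket{\phi_2}$. I would argue that no canonical such identification exists, since distinct geometries admit no preferred isometry and any chosen gauge fixing is arbitrary, so the relative phase---and hence the interference cross term $\braket{\phi_1}{\phi_2}$ between the sectors---is either structurally zero or defined only up to an arbitrary choice. This is exactly the sense in which the ontic states \emph{dissociate}: the formal linear structure of $\hilbert$ outruns the physically realizable superpositions, so across distinct geometries $\ket{\Psi}$ behaves as a mixture of ontic branches rather than a coherent superposition.

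Next I would address reassociation and the role of decoherence. Because the obstruction is the absence of a common reference, it is lifted precisely when the dynamics returns the two sectors to a shared geometry, restoring the identification and reviving the cross terms; hence dissociation is in general reversible. Invoking decoherence, I would argue that once the differing geometries become entangled with uncontrolled environmental degrees of freedom the would-be common reference can no longer be reconstructed, so the dissociation becomes effectively irreversible. Combined with Principle \ref{pp::correspondence} and Observation \ref{obs:self-location}, this licenses counting each persisting ontic geometry as a separate world, supplying the additional justification for the many-worlds interpretation, with probabilities fixed by Theorem \ref{thm:born_rule_counting}.

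The main obstacle is making \quot{cannot be superposed unambiguously} precise without smuggling in a background that would trivialize the claim. I expect the crux to be formalizing the non-existence of a canonical identification between inequivalent geometries so that it cleanly forces the inter-sector inner product to vanish or to be gauge-arbitrary. I would handle this at the level of the reduced physical Hilbert space obtained after solving the diffeomorphism constraint, where states supported on geometrically inequivalent configurations lie in superselection-like sectors whose relative phase carries no invariant meaning, deferring the detailed technical construction to \cite{Stoica2022BackgroundFreedomLeadsToManyWorldsLocalBeablesProbabilities}.
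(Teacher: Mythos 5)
The first thing to say is that the paper contains no proof of this Observation: it is a programmatic, interpretive remark whose entire justification is deferred to the external reference \cite{Stoica2022BackgroundFreedomLeadsToManyWorldsLocalBeablesProbabilities}. So there is no internal argument to compare yours against; your sketch is, in spirit, a reasonable reconstruction of what that cited work argues (no canonical point-by-point identification between inequivalent three-geometries, hence no background relative to which a linear combination can be read as a coherent superposition). Your closing move of deferring the technical construction to the same reference is, in effect, exactly what the paper itself does.

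However, your central lemma as formulated would fail, on two counts. First, you locate the obstruction in the inner product, arguing that $\braket{\phi_1}{\phi_2}$ between sectors is ``structurally zero or defined only up to an arbitrary choice.'' But in the paper's own framework the basis $\(\ket{\phi}\)_{\phi\in\mc{C}}$ is delta-orthogonal by construction, \cf eq.~\eqref{eq:basis}: $\braket{\phi_1}{\phi_2}=0$ for \emph{any} two distinct configurations, whether or not their geometries agree, so vanishing cross terms carry no interpretive force and cannot be what distinguishes the background-free case. The claim ``cannot be superposed unambiguously'' concerns the physical interpretability of the linear combination $a\ket{\phi_1}+b\ket{\phi_2}$ --- absent a common manifold of points there is nothing the combination describes as ``both configurations at once'' --- not the value of any inner product. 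Second, your superselection-sector framing proves too much and contradicts the Observation itself: if the relative phase between sectors genuinely carried no invariant meaning, the coherent \emph{reassociation} that the statement explicitly allows (prior to decoherence) would be impossible, since there would be no retained phase to govern the revived interference. The intended ``dissociation'' is strictly weaker than superselection: unitarity is untouched, the relative phase persists in the state, and interference can revive once dynamics returns the branches to a common geometry; decoherence, not any kinematical superselection rule, is what makes the split effectively permanent. To be faithful to the claim you would need to weaken your key step from ``the inter-sector inner product vanishes or is gauge-arbitrary'' to ``while the geometries differ, the linear combination admits no physical reading as a superposition,'' keeping the full Hilbert-space structure (and the phase) intact throughout.
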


\section{Discussion}
\label{s:discussion}

The article presents a way to understand quantum mechanics that makes sense of the probabilities in a way close to that in classical physics, perhaps better, as I shall argue below.
The starting point is Principle \ref{pp:correspondence}, which insists on the existence of an ontic basis, which should be understood as the existence of a distribution of classical worlds. Principle \ref{pp:micro-macro} connects the microstates, which are instantaneous classical worlds, to the macrostates. This does not preclude the quantum character of the world, because at the macro level only the macro-classicality is perceived, while micro-classicality is hidden, and the dynamics ensures the quantum behavior noticed through measurements. Since the dynamics do not maintain a continuity of the microstates, such a classical world does not maintain its identity across time, but it evolves into more classical worlds. The dynamics make them manifests interference, and this is why quantum phenomena exist just like in standard quantum mechanics.
If the wavefunction would be real rather than complex, it would be equivalent with a probability distribution of coexisting classical worlds. But since the complex phase is present and the evolution of the density depends on it, we need a way to integrate phase into the classical worlds. This can be done naturally if the classical worlds would be classical field configurations, in the wavefunctional formulation of quantum mechanics. In this case, a classical field configuration can have associated a phase that is, classically, the gauge, and transforms under the gauge symmetries. Principle \ref{pp:wavefunctional} stipulates this correspondence.

Probabilities are as in classical physics, with the difference that the probability distribution is not about a single classical world whose microstate is incompletely known, but about many parallel classical worlds in which the agent can be instantiated.

As mentioned in Remark \ref{rem:classical-probability}, the classical interpretation of  probabilities is problematic even in a classical world, mainly due to the appeal to the Principle of Indifference. Alternative interpretations of probabilities exist, but problems still persist. The frequentist approach works for a too large number of repetitions to extract a statistical significant probability, and this leaves probabilities for unique or rare events uncovered. The Bayesian approach is perhaps the most promising, and in emphasizing the necessity to the priors, it leads to the understanding that probabilities in the physical world should rely be conditioned by an assumption about the very initial state of the universe \cite{Boltzmann1964LecturesOnGasTheory}, assumption sometimes called the \emph{Past Hypothesis} \cite{DavidZAlbert2015AfterPhysics,Wallace2012TheEmergentMultiverseQuantumTheoryEverettInterpretation}. This is not a drawback of the Bayesian approach, but a merit, since it reveals this necessity.

The problem with the Principle of Indifference is due, at least partially, to the necessity to appeal to counterfactuals. This is a problem: if only one world exists, the probability distribution should be a Dirac function centered on the actual world. Any probability arisen from our insufficient knowledge of the microstate should be derived by deduction or induction from the actual world alone, and so it would not be a probability distribution over possible worlds, of which only one really exists. Even with the adoption of a Bayesian position, this undermines the idea of counterfactuals grounding the very idea of epistemic probability, since this is grounded in something that does not exist. I think, along with David Lewis \cite{Lewis1973Counterfactuals,Lewis1986OnThePluralityOfWorlds}, that the counterfactuals should have physical existence as well. But of course, if they are inaccessible to us, the probability distribution itself is inaccessible, hence again the necessity of a Bayesian approach and of the Past Hypothesis. However, since the world is quantum, and since the dynamics does not connect one-to-one various temporal instances of the worlds, but it rather allows them to mix in the way we know as interference, these worlds have effects on each other. This gives a strong justification of taking the parallel worlds seriously as physically real rather than as non-physical possibilities. In my opinion, this shows that ignorance about self-location provides a better grounding for probabilities than ignorance about the microstate. Perhaps even if the world would have been classical, parallel worlds would have had to be invented to ground probabilities physically.

As mentioned in the Introduction, Gleason's theorem does not justify the Born rule in a probabilistic sense, but only as a measure, and he never claimed more. The confusion between measure and probability measure may come from the way mathematicians define and use probability measures as a particular case of a measure. But in the physical domain not any non-negative measure is a probability measure, as we can see from the examples of mass density and other densities encountered in physics. Therefore, to solve this problem, many researchers tried to put on a firm ground a reason to interpret the squared amplitude probabilistically, mainly in the context of MWI.
The most straightforward seems to be to write the state vector as a linear combination of equal-amplitude state vectors corresonding to definite outcomes of the measurements, and to count them \cite{Everett1957RelativeStateFormulationOfQuantumMechanics,deWittGraham1973ManyWorldsInterpretationOfQuantumMechanics,Saunders2024FiniteFrequentismExplainsQuantumProbability}.
While constrained by the outcome definiteness requirement for the worlds in the decomposition, this is arbitrary, since infinitely many different ways to achieve this are possible whenever one way exists.
If we would count all state vectors in all of these decompositions we would obtain an overcounting, as shown in Proposition \ref{thm:overcoungting}.
Deutsch, followed by Wallace, and Saunders, brought various compelling reasons based on decision theory \cite{Deutsch1999QuantumTheoryOfProbabilityAndDecision,Wallace2002QuantumProbabilitiesAndDecisionRevisited,Saunders2004BornRuleFromOperationalAssumptions}. Another motivation was based on the ``measure of existence'', introduced by Vaidman \cite{Vaidman2012ProbabilityInMWI}.

I will not dispute these justifications of the squared amplitudes as probabilities, they are compelling, and since the amplitudes are forced onto us as providing probabilities if we are not to violate the Schr\"odinger unitary evolution, there must be a reason why this should work.
With the proposal from this article, I just want to suggest that maybe the squared amplitudes speak about the classical worlds as explained here, and this is what grounds their understanding as probabilities in both the decision-theoretic and the ``measure of existence'' proposals.

The proposal of a different version of MWI from Section \sref{s:MWI} is not the first attempt to make sense of the quantum probabilities as distributions of classical worlds. 
The suggestion seems to be already built in Bohm's theory, which takes the world as consisting of point-particles guided by the wavefunction \cite{Bohm1952SuggestedInterpretationOfQuantumMechanicsInTermsOfHiddenVariables}. The idea that all possible configurations of point-particles may coexist distributed according to the wavefunction's squared amplitude was put forward by Bell in 1976 \cite{Bell2004TheMeasurementTheoryOfEverettAndDeBrogliesPilotWave}, although he argued against Everett's interpretation and in favor of Bohm's.
A similar proposal, but more developed, was made by Tipler \cite{Tipler2006WhatAboutQuantumTheoryBayesAndTheBornInterpretation}.

More recently, Bostr\"om as well proposed to keep all Bohmian configurations and the wavefunction to guide them as in Bohm's theory, similar to Bell \cite{Bostrom2015QuantumMechanicsAsADeterministicTheoryOfAContinuumOfWorlds}.

Interestingly, it is possible to ``absorb'' the information from the wavefunction by rewriting the guiding equation in terms of the higher order derivatives, as noticed by Poirier and Schiff \cite{Poirier2010BohmianMechanicsWithoutPilotWaves,SchiffPoirier2012CommunicationQuantumMechanicsWithoutWavefunctions}, and independently at the same time by Raykin \cite{Raykin2012AnalyticalQuantumDynamicsInInfinitePhaseSpace}. This allows for a single world as in Bohm's theory, but without the wavefunction. However, the absorption of the wavefunction in the higher-order derivatives does not seem to be physically different from fully including the wavefunction in the dynamical law, as proposed by D{\"u}rr, Goldstein, and Zangh{\`i} \cite{Durr1995BohmianWavefunction}. But Poirier and Schiff tried to replace the wavefunction with a probability distribution over point-particle configurations. By contrast, Bostr\"om found it necessary to keep the wavefunction as well. After all, the phases cannot be absorbed in the point-particle configurations, and they play an active role in the dynamics.

The proposal of Hall, Deckert, and Wiseman \cite{HallDeckertWiseman2014QuantumPhenomenaModeledByInteractionsBetweenManyClassicalWorlds}, based on taking the continuum limit of a finite number of classical interacting worlds, attempts to remove the need for the wavewfunction by recovering the phase from the velocities. Other proposals based on classical configurations include Sebens' ``many interacting worlds'' \cite{Sebens2015QuantumMechanicsAsClassicalPhysics}, Tappenden's version of MWI \cite{Tappenden2023SetTheoryAndManyWorlds}, and, although it does not assume classical trajectories but only instantaneous configurations, Arve's version of MWI \cite{Arve2020EverettsMissingPostulateAndTheBornRule}.

The proposal from Section \sref{s:MWI} differs from these approaches in a couple of aspects. The first difference, which may be inessential, is that in this proposal the classical worlds do not maintain their identity during the evolution. This was already discussed by Bell in 1976 \cite{Bell2004TheMeasurementTheoryOfEverettAndDeBrogliesPilotWave}, page 98,
\begin{quote}
``But we learn from Everett that if we do not like these trajectories we can simply leave them out. We could just as well redistribute the configuration {$(\mathbf{x}_1, \mathbf{x}_2,\ldots)$} at random (with weight {$\left\lvert\psi\right\rvert^2$}) from one instant to the next. For we have no access to the past, but only to memories, and these memories are just part of the instantaneous configuration of the world. [... This] is logically coherent, and does not need to supplement mathematical equations with vague recipes. But I do not like it. Emotionally, I would like to take more seriously the past of the world (and of myself) than this would permit.''
\end{quote}

The proposals of Poirier \& Schiff and Bostr\"om ensure this continuity. Mine does not, at least not if it remains in the form stated here. But I do not think it is a problem, and if it is, we immediately can add a continuity of the worlds across time, based on the guiding equation like these other authors did, but for the wavefunctional. A way to do this is by using Valentini's guiding equation for field configurations \cite{Valentini1996PilotWaveTheoryOfFieldsGravitationAndCosmology}. However, if we would have two dynamical equations, or pay the price of having both the wavefunction and the classical worlds, and a potential connecting the classical worlds as arguments of the guiding equation, wouldn't it be much cheaper to stick with Bohm's theory, which has two dynamical equations and he wavefunction, but only a classical world? Therefore, while it is possible to restore the time continuity of the classical worlds, I find it too expensive to my taste.

Another difference is that the version of MWI proposed here does not rely on non-relativistic quantum mechanics, but on the quantum field theory, in the wavefunctional formulation. This allows the totality of classical worlds fully recover the wavefunctional.
By using the wavefunctional, the phase of each classical world can be included as a gauge in each of the classical worlds, as seen in Section \sref{s:complex}. Therefore, in this proposal the wavefunctional is fully recoverable from the ensemble of coexisting classical worlds.

In consequence, this proposal can be understood as a natural continuation of classical physics, based on the following steps. First, replace the classical ensemble of worlds, which contains a single unknown world and many other possible but non-physical worlds, with an ensemble of coexisting worlds.
The classical worlds are taken to consist of classical field configurations in the wavefunctional formulation of quantum theory.
The dynamics, given by Schr\"odinger's wavefunctional equation, applies to the full ensemble, not to the individual worlds. 
The complex phase factor in quantum mechanics can be fully replaced by gauges of the classical field configurations. In this way, the full wavefunctional is replaced by an ensemble of classical coexisting and co-evolving worlds.

%
%
%
%
%
%
%
%
%
%
%
%


\end{document}